\documentclass[draft]{agujournal2019}
\usepackage{url} 
\usepackage{lineno}
\usepackage[inline]{trackchanges} 
\usepackage{soul}
\usepackage{amsmath}
\usepackage{amssymb}
\usepackage{amsfonts}
\usepackage{bbm}
\usepackage{tikz}
\usepackage{float}
\usepackage{enumitem}
\usetikzlibrary{positioning} 
\usetikzlibrary{calc}
\usepackage{amsthm} 
\newtheorem{theorem}{Theorem}
\theoremstyle{remark}
\newtheorem{remark}{Remark}
\draftfalse

\journalname{Journal of Advances in Modeling Earth Systems (JAMES)}

\begin{document}

%
%

\title{Distillation and Interpretability of Ensemble Forecasts of ENSO Phase using Entropic Learning}

%
%




\authors{Michael Groom\affil{1}, Davide Bassetti\affil{2}, Illia Horenko\affil{2}, Terence J. O'Kane\affil{3}}

\affiliation{1}{CSIRO Environment, Eveleigh, New South Wales, Australia}
\affiliation{2}{Faculty of Mathematics, Rheinland-Pfälzische Technische Universität Kaiserslautern Landau, Kaiserslautern, Germany}
\affiliation{3}{CSIRO Environment, Hobart, Tasmania, Australia}




\correspondingauthor{Michael Groom}{Michael.Groom@csiro.au}



\begin{keypoints}
\item We compress large entropic learning ensembles into compact distilled models while maintaining skill and improving interpretability
\item Diagnostic maps derived from the distilled models successfully trace the evolution of ENSO events from their initial precursor states
\item Important predictors are linked to known ocean-atmosphere precursors, highlighting physical consistency and improving trustworthiness 
\end{keypoints}

%
%

%
%


\begin{abstract}
This paper introduces a distillation framework for an ensemble of entropy-optimal Sparse Probabilistic Approximation (eSPA) models, trained exclusively on satellite-era observational and reanalysis data to predict ENSO phase up to 24 months in advance. While eSPA ensembles yield state-of-the-art forecast skill, they are harder to interpret than individual eSPA models. We show how to compress the ensemble into a compact set of ``distilled'' models by aggregating the structure of only those ensemble members that make correct predictions. This process yields a single, diagnostically tractable model for each forecast lead time that preserves forecast performance while also enabling diagnostics that are impractical to implement on the full ensemble. 

An analysis of the regime persistence of the distilled model ``superclusters'', as well as cross-lead clustering consistency, shows that the discretised system accurately captures the spatiotemporal dynamics of ENSO. By considering the effective dimension of the feature importance vectors, the complexity of the input space required for correct ENSO phase prediction is shown to peak when forecasts must cross the boreal spring predictability barrier. Spatial importance maps derived from the feature importance vectors are introduced to identify where predictive information resides in each field and are shown to include known physical precursors at certain lead times. Case studies of key events are also presented, showing how fields reconstructed from distilled model centroids trace the evolution from extratropical and inter-basin precursors to the mature ENSO state. Overall, the distillation framework enables a rigorous investigation of long-range ENSO predictability that complements real-time data-driven operational forecasts.
\end{abstract}

\section*{Plain Language Summary}
El Ni\~no--Southern Oscillation (ENSO) is the dominant mode of interannual climate variability in the Pacific that alternates between El Ni\~no, La Ni\~na, and neutral phases and has broad societal and economic impacts. State-of-the-art methods are now able to reliably forecast the phase of ENSO up to 2 years in advance, often by combining the predictions of many individual models into a large ensemble. While ensembling improves forecasting skill, it also makes it harder to interpret why a particular forecast was made. In this study, we develop an approach for compressing a large ensemble of models into a much smaller set of ``distilled'' models that are still able to make accurate long-range forecasts but are also much easier to interpret. Our work provides a practical tool that makes advanced forecasting techniques both trustworthy and useful, not just for issuing forecasts, but for deepening our scientific understanding of what makes the climate predictable.

\section{Introduction} \label{sec:intro}

The El Ni\~no--Southern Oscillation (ENSO) is the dominant mode of interannual climate variability, and accurate seasonal-to-interannual forecasts of ENSO are critical for mitigating climate-related risks across global ecosystems and economies. While dynamical models based on coupled ocean-atmosphere physics have historically served as the operational standard, they continue to face challenges related to model bias and the well-known spring predictability barrier \cite{Barnston2012, Tippett2012}. In recent years, data-driven approaches (particularly those utilising Deep Learning) have emerged as competitive alternatives, demonstrating the ability to exceed the forecast skill of dynamical models and remain skilful at lead times of up to 18 months or longer \cite{Ham2019,Zhou2023}. However, the intrinsically ``black box'' nature of Deep Learning (DL) models raises concerns regarding the physical plausibility and trustworthiness of their predictions.

In an attempt to address these concerns, a growing body of research has sought to integrate interpretability and Explainable AI (XAI) methods, into DL-based ENSO forecasting. For example, \citeA{Wang2023a} demonstrated that their spatio-temporal information extraction and fusion DL model has skill out to 22 months, and used backpropagation to derive the gradient of the prediction with respect to each input variable at each grid point as a means of extracting sources of ENSO predictability. \citeA{Chen2025} and \citeA{Lyu2024} developed hybrid architectures combining Convolutional Neural Networks (CNNs) and Transformers that achieve correlation skill exceeding 0.5 at lead times of up to 20 months. \citeA{Chen2025} applied a gradient-based sensitivity analysis to show that their model captures physical ENSO precursors that are consistent with established mechanisms, as well as potentially offering new insights on inter-basin interactions at longer lead times, while \citeA{Lyu2024} used Integrated Gradients \cite{Sundararajan017} to show which regions of sea surface temperature (SST) anomalies are most sensitive for predicting the Ni\~no3.4 index at varying lead times for El Ni\~no vs. La Ni\~na events. 

Although the XAI methods employed in these studies are primarily external methods for adding interpretability to a given DL model---rather than being ``baked-in'' to the model structure itself---analysing the predictions made by DL models using these interpretability techniques has often led to the verification of known physical precursors, and in some cases has highlighted mechanisms that were previously unknown or under-appreciated. \citeA{Zhou2024} performed perturbation experiments on their Transformer-based model to quantify input-output relationships, revealing the existence of upper-ocean temperature anomaly pathways that propagate in a counter-clockwise loop consisting of two bands at 5$^\circ$S-5$^\circ$N and 8$^\circ$N-15$^\circ$N. \citeA{Li2024} quantified the relative contributions of SST anomalies in different ocean basins, confirming the dominant role of the tropical Pacific as well as highlighting the role of remote forcing from the Indian and tropical Atlantic oceans at longer lead times. \citeA{Wang2024a} demonstrated the important role that sea surface salinity plays in improving skill at lead times greater than 6 months, while \citeA{Colfescu2024} showed that 10m zonal wind, in particular over the Indian Ocean, has comparable predictive skill to that of SST for lead times between 11 and 21 months. When combined with linear correlation/regression analysis, this finding was suggested to be mechanistically linked to preceding SST anomalies over the western Indian Ocean and their subsequent eastward propagation via the 10m wind anomalies. Finally, \citeA{Chen2025a} identified a tropical Pacific Ocean mode in 300m heat content that is responsible for much of the skill beyond 1 year in the popular CNN model of \citeA{Ham2019}.

DL models have also proven capable of capturing ENSO diversity and asymmetry. \citeA{Sun2023} developed a CNN-based model for predicting SST anomalies in 13 different regions across the tropical Pacific, with skill out to 12 months, thereby being able to distinguish between Central and Eastern Pacific events. The authors also generated activation maps for successfully predicted events, finding that precursors in the north Pacific, south Pacific and tropical Atlantic play a critical role at 10 months lead time, while for correctly predicted events at 16 months lead time, signals in the tropical Pacific associated with the discharge-recharge cycle are most important. \citeA{RiveraTello2023} also utilised a CNN-based model to predict both Eastern (EP) and Central (CP) Pacific ENSO diversity indices, with a focus on strong EP El Ni\~no events. By employing the technique of Layerwise Relevance Propagation \cite{Binder2016}, they were able to show that well-known precursors, such as high SST and sea surface height (SSH) anomalies in the eastern Pacific in combination with westerly wind anomalies in the central-western equatorial Pacific, were most relevant for successfully predicting events where the EP index was greater than 1.5, as well as some (perhaps) less well-known precursors such as southerly wind anomalies in the north-eastern Pacific, consistent with the wind signature of the positive phase of the North Pacific Meridional Mode \cite{Chiang2004}. 

Despite these advances, some challenges remain. For example, gradient-based explanations are post-hoc approximations that can be unstable or noisy, \textbf{raising questions about whether they truly represent the model's decision-making process} \cite{Adebayo2018,Ghorbani2019}. Furthermore, the vast majority of the studies cited above use Coupled Model Intercomparison Project (CMIP) climate model output as their main source of training data. This is often due to there being insufficient observational data required for training performant DL models (for example, the \citeA{Ham2019} CNN model requires at least $\mathcal O(1000)$ samples according to \citeA{Chen2025a}), but risks the introduction of known systematic biases into the DL model outputs \cite{Zhou2023}. \textbf{It is therefore highly desirable to be able to replicate the findings of these studies while relying solely on high-quality observational/reanalysis products}, e.g. datasets produced using satellite observations and other modern remote sensing methods. This necessitates the use of a methodology other than deep learning for circumventing the ``small data'' problem \cite{Horenko2020}. 

Entropic Learning (EL) methods such as the entropy-optimal Sparse Probabilistic Approximation (eSPA) algorithm, originating from the pioneering work by \citeA{Gerber2020}, are explicitly designed for this small data regime and have been shown to provide predictions that are comparable to or better than DL across a wide range of tasks, but requiring vastly fewer model parameters and at a small fraction of the computational cost \cite{Horenko2020,Vecchi2022,Horenko2023,Vecchi2024,Bassetti2024,Bassetti2025}. By restricting model complexity through information-theoretic principles, \textbf{EL allows for models to be trained directly on the limited historical observational record without needing to rely on climate model outputs}, while also obtaining state-of-the-art skill in long-range ENSO prediction \cite{Groom2024,Groom2025}. Another key advantage of EL models is their inherent interpretability; rather than requiring post-hoc analysis, \textbf{the model structure itself---based on clustering in a probabilistically sparsified feature space---is completely transparent}. This structure enables insights that are otherwise difficult to obtain for the class of problems considered here. For example, the feature importance vectors that are learnt by EL models are not possible to obtain using common techniques such as permutation importance \cite{Altmann2010} for the types of learning tasks considered in this paper, due to the existence of serial correlations in the data.

In \citeA{Groom2025}, we sought to operationalise our earlier results on ENSO prediction with eSPA \cite{Groom2024} by generating an ensemble of eSPA models and assessing skill against the International Research Institute (IRI) ENSO prediction plume \cite{Ehsan2024}. Compared to the IRI plume, the eSPA ensemble produced probabilistic forecasts with comparable skill over the range of lead times for which the IRI forecasts are issued for, while maintaining skill out to 20-24 months lead time (depending on the particular skill metric). However, this performance came at a cost, since the large ensemble of models made applying the interpretability techniques introduced in \citeA{Groom2024} difficult, as they are intended to be applied to a single eSPA model. This trade-off between ensemble performance and model interpretability motivates the application of \textit{distillation} techniques. \citeA{Bucilua2006} introduced the concept of model compression, demonstrating that the function learned by a large ensemble of classifiers could be compressed into a single, smaller model without significant loss of accuracy. This idea was generalised by \citeA{Hinton2015} to knowledge distillation, whereby a compact ``student'' model is trained to mimic the soft targets (i.e. class probabilities) of a more cumbersome ``teacher'' ensemble. 

In this study, we invoke the term distillation to denote a similar goal as those studies, i.e. to learn a more compact model from a large ensemble while retaining as much performance as possible, while noting that the specific methodology introduced here for achieving this goal in the context of EL is different. Specifically, we apply distillation to an eSPA ensemble forecasting system very similar to the one introduced in \citeA{Groom2025}. Rather than training a neural network student, we leverage eSPA's clustering structure to aggregate the internal states of successful ensemble members into a reduced set of ``superclusters.'' This process distils the ensemble into a single, parsimonious representation for each lead time, retaining the probabilistic skill of the original ensemble while enabling a rigorous examination of how successful long-range ENSO predictions are made. In addition to the various interpretability techniques introduced in \citeA{Groom2024}, which can be applied to the distilled models with only minor changes, we introduce new techniques that are analogous to the gradient-based sensitivity/salience methods for DL but which avoid many of the known pitfalls of those methods.

The remainder of this paper is organised as follows. Section \ref{sec:methods} describes the datasets, pre-processing, ensemble forecast procedure, distillation methodology as well as various quantities that can be derived from the distilled models. Section \ref{sec:results} presents the main results, including feature importance visualisation, case studies of key events and the identification of precursors. Section \ref{sec:conclusions} discusses the implications of the findings and concludes the paper. In addition to the figures shown in this paper and the supporting information, a full suite of interactive visualisations for every plot type and combination of lead time, class etc is available via a web app hosted at \url{https://app.michaelgroom.net/}. The interested reader is highly encouraged to explore these interactive visualisations, as they allow for a much richer examination of the full set of results than can be conveyed through the limited number of static images in this manuscript.

\section{Methodology} \label{sec:methods}

\subsection{Datasets and pre-processing} \label{subsec:pre-processing}
Following \citeA{Groom2025}, only observational/reanalysis data from the satellite era are employed when training and validating the entropic learning models in this study. A similar set of oceanic and atmospheric fields, as in \citeA{Groom2025}, is used as predictors, with only minor updates to how these fields are processed, which will be summarised here. 

The oceanic fields considered are monthly means of global sea surface temperature (SST) between 60$^\circ$S-60$^\circ$N and the vertical derivative of subsurface temperature ($\mathrm dT/\mathrm d z$) between 30$^\circ$S-30$^\circ$N and restricted to longitudes of 120$^\circ$E-90$^\circ$W and depths of 0-700m. The atmospheric fields considered are monthly means of the zonal and meridional surface wind stresses ($\tau_x$ and $\tau_y$), restricted to latitudes of 20$^\circ$S-20$^\circ$N and longitudes of 120$^\circ$E-80$^\circ$W, with a mask applied so that only oceanic wind stresses are selected. The SST and wind stress data are taken from the ERA5 reanalysis \cite{Hersbach2020} on a global 0.25$^\circ\times$0.25$^\circ$ grid. SST comes from the boundary conditions of ERA5, which are provided by the HadISST2.1.1.0 product \cite{Titchner2014} prior to September 2007 and OSTIA \cite{Donlon2012} thereafter. The $\mathrm dT/\mathrm d z$ data are calculated from the 3D potential temperature field provided by the ORAS5 reanalysis \cite{Zuo2019}, which is forced by CORE bulk formulas \cite{Large2009} derived from ERA-Interim \cite{Dee2011} from 1979 to 2014 and from the ECMWF operational NWP system thereafter. All three datasets are restricted to the satellite era (taken to be January 1979 onwards) so as to avoid the much higher uncertainties in the reanalysis products prior to them assimilating satellite observations. 

The Ni\~no3.4 index is calculated from the ERA5 SST field using a 1991-2020 climatology for consistency with real-time operational forecasts \cite{Ehsan2024}. Unlike in \citeA{Groom2025}, in this study we use the relative Ni\~no3.4 index \cite{LHeureux2024}, whereby the tropical mean anomaly is subtracted from the Ni\~no3.4 index and the resulting index is rescaled to have the same variance as the original index. This has the advantage of removing any net warming trend from the Ni\~no3.4 index and has also been demonstrated to be less sensitive to the particular choice of climatological period. The 1991-2020 climatology is also used for calculating anomalies of the SST, $\mathrm dT/\mathrm d z$ and wind stress fields, which are then linearly detrended. The anomalies are also regridded to a 1$^\circ\times$1$^\circ$ grid using bilinear interpolation, with an isotropic Gaussian filter applied for anti-aliasing. 

Following \citeA{Groom2025}, a Singular Spectrum Analysis (SSA) is performed on each of the three fields (the two wind stress components are stacked together in the SVD) using a 12-month embedding length. Initially, a fixed percentage of the total variance in each field (85\% for SST and $\mathrm dT/\mathrm d z$ and 70\% for wind stress) is used to select the number of PCs to retain as features (162, 188 and 83 respectively), which is then further trimmed using the method described in section \ref{subsec:importance} below. To improve skill at short lead times, the monthly Ni\~no3.4 index is also added as a feature, along with the class probability distributions at lead times up to $n-1$ months \cite{Groom2025}. The targets for prediction (class probability distributions) are generated by considering the probability of the Ni\~no3.4 index being greater than $0.5^\circ$C (El Ni\~{n}o), less than $-0.5^\circ$C (La Ni\~{n}a) or neither (neutral) in $n$ months time. As in \citeA{Groom2025}, the 3-month running average of the Ni\~no3.4 index is used for defining the classes, which are labelled as classes -1, 0 and 1 respectively when calculating metrics that depend on the ordinal ranking of classes, such as the ranked probability score. As of January 2025, a total of 552 instances are available for training.

\subsection{Ensemble forecast procedure} \label{subsec:forecast-procedure}

Initially, a set of predictions are generated by training an ensemble of eSPA classifiers at each lead time $n$ and then averaging over their individual predictions. Note that separate eSPA models are trained for each lead time of 1, 2, 3, $\ldots$, 24 months, as opposed to a single model that makes predictions for multiple lead times, and for each new forecast (i.e. a sequence of 1-24 month lead predictions starting from a particular date) a completely independent set of models is trained from scratch. The ensemble forecasts are performed using the same methodology described in \citeA{Groom2025}, which will be briefly summarised here. 

For each year and month in the evaluation period (which contains start dates from January 2002 to December 2022), a feature matrix $X$ is loaded, containing only instances up to the month prior to the start date (i.e. the first month to be predicted). A final pre-processing step of mapping the data to a uniform distribution with values between 0 and 1 via a quantile transformation is applied to each of the features in $X$. Then, for each lead time $n$ from 1 to 24 months, the class probability matrix $\Pi^{(n)}$ is loaded, containing only instances up to $n+1$ months prior to the start date (reflecting real-time conditions where the data can only be labelled up to the month at which the forecast is issued). To help condition the predictions of each model at lead time $n$ on the sequence of class probabilities that have already been observed/predicted, we provide as additional features the class probability distributions up to $n-1$ months. For the training set these will be the true distributions, while for the prediction at $n$ months ahead of the latest available data we provide the mean predictions from the ensemble that have already been made up to $n-1$ months ahead. Furthermore, to help with the seasonal variability in ENSO precursors, only instances within $\pm1$ month of the target month are retained. 

Given the data $(X,\:\Pi^{(n)})$, an ensemble of 50 eSPA models is trained by performing 50 separate train/test splits of the data (keeping 80\% of the data from training and 20\% of the data for testing), conducting a grid search for the optimal hyperparameters ($K$, $\varepsilon_E$ and $\varepsilon_C$ being the number of clusters, the entropy weighting and classification weighting respectively) on each split and saving the model with the best score on the test set (measured using the ranked probability score). These 50 models are then used to predict the class probability distribution for the most recent instance in $X$ (which is unlabelled and therefore not in the training set), with each of these individual predictions averaged to give the ensemble mean prediction at lead $n$ for that forecast. For further details see \citeA{Groom2025}. One important difference from that study is that, rather than perform a new SSA decomposition, updated every year of the hindcast period (thus reflecting real-time conditions), here we perform a single SSA using the entire dataset to generate the PCs used in the feature matrix $X$. While all other anti-leakage protocols from \citeA{Groom2025} are retained, this modification is made to ensure that the same latent space is being used by all eSPA models across all forecasts, which is a necessary assumption to make when applying the various interpretability methods described in the following sections. Comparisons with the strictly real-time approach of \citeA{Groom2025} show that the present approach does not result in any artificial increase in skill (see figure S1 of the supporting information for details).

\begin{figure}[t]
  \centering
  \begin{tikzpicture}
    \node[draw, rounded corners=5pt, line width=1pt, inner sep=3pt] {
      \includegraphics[width=0.95\textwidth]{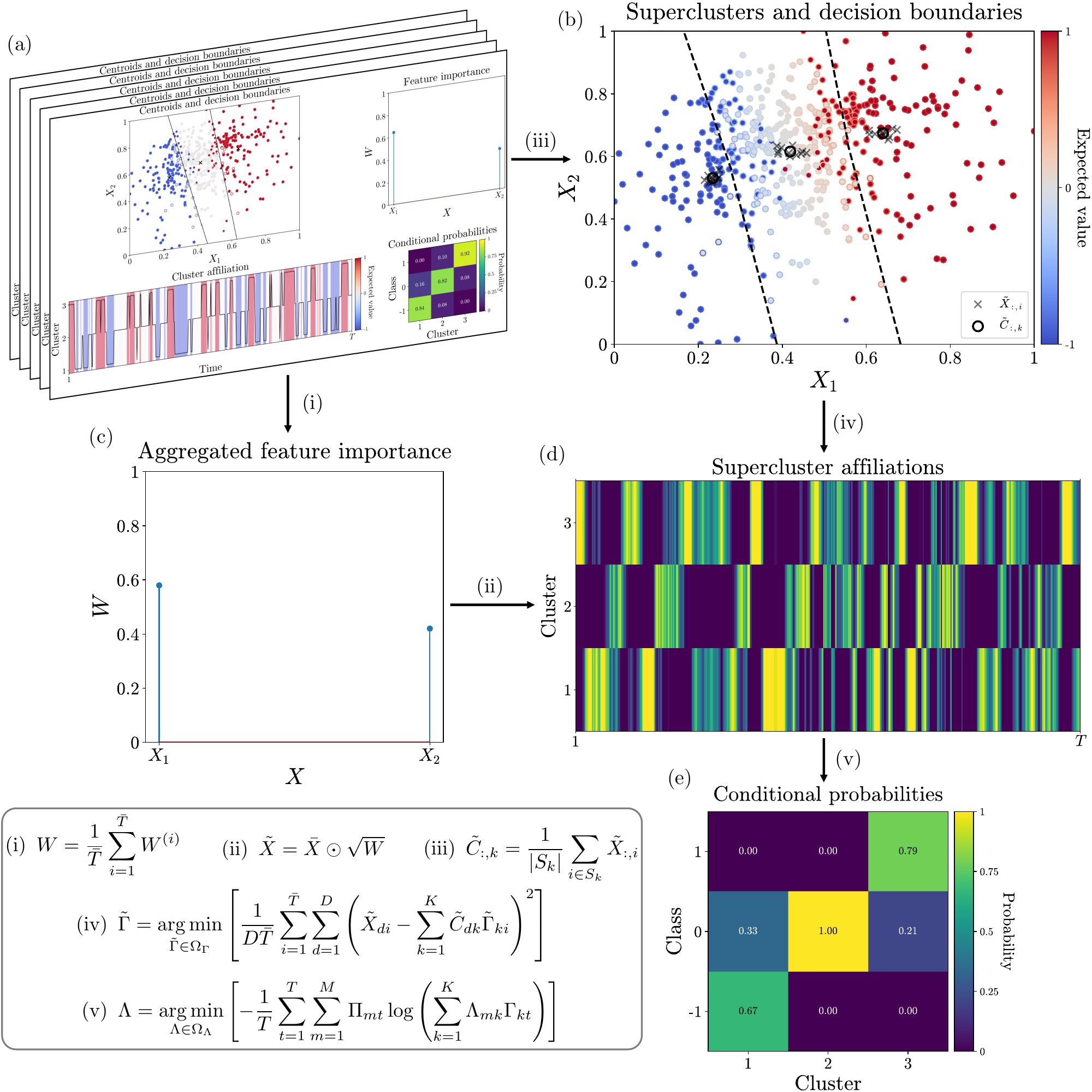}
    };
  \end{tikzpicture}
  \caption{Schematic of the distillation procedure. Starting with an ensemble of eSPA models (a)---see Figure 2 of \citeA{Groom2025} for a complete description---a set of superclusters is fit to the centroids of each eSPA model (b), and the aggregated feature importance is calculated (c). Then, fuzzy affiliations over these superclusters are calculated (d), along with a matrix of conditional probabilities based on these fuzzy affiliations (e). The legend in the bottom left contains the key equations involved in each of these steps. The dashed lines in (b) indicate the decision boundaries.}
  \label{fig:summary} 
\end{figure}

\subsection{Distillation of ensemble forecasts} \label{subsec:distillation}

The approach described in section \ref{subsec:forecast-procedure} above, when combined with the post-processing techniques discussed in \citeA{Groom2025}, has been demonstrated to produce forecasts with probabilistic skill that is comparable to operational baselines (such as the International Research Institute for Climate and Society ENSO prediction plume) over the range of lead times they issue forecasts for (typically $\le$ 12 months), while also being substantially cheaper to operate than a conventional, fully-coupled dynamical seasonal prediction system. However, as discussed in the introduction, the ensemble forecasting approach introduces a lot of additional complexity that makes applying the interpretability methods introduced in \citeA{Groom2024} more challenging, as there is now a total of 302400 models to analyse for the evaluation period of January 2002 to December 2022. This motivates an attempt to distil the information contained in the full ensemble down to a single model for each lead time while retaining as much of the skill as possible, thereby enabling the application of various interpretability methods to this much smaller set of distilled models. A visual summary of this procedure is given in figure \ref{fig:summary}, with a detailed description of each step given in the sections below.

\subsubsection{Affiliations over superclusters} \label{subsec:superclusters}

The distillation process starts by employing a filtering step which is common to many of the DL studies cited in the introduction, namely to select only the models at each lead time from the full ensemble whose predictions for their respective target dates are verified to indicate the correct phase of ENSO. The next step is to collect the centroids from these models that the most recent instance in their respective training set features $X$ was assigned to (which corresponds to the centroid each model relied on to make its prediction in its respective forecast), transform them back into the unscaled feature space through the inverse of the quantile transform that was applied, and save them to a new dataset $\bar X\in\mathbb R^{D\times N\times \bar T}$ that is indexed by feature dimension $d=1,\ldots,D$, lead time $n=1,\ldots,N$ and instance $i=1,\ldots,\bar T$ (note: the maximum possible number of instances at each lead time is $\bar T=12600$). In the schematic in figure 1, these model centroids are displayed as grey crosses in panel (b). The $W$ vector for each correct model is also stored to produce a single, averaged vector $W^{(n)}\in \mathbb R^D$ for each lead time (step (i) and panel (c) of figure \ref{fig:summary}).

Next, $k$-means clustering is performed on the new dataset $\bar X^{(n)}$ for each lead time (step (iii) in figure \ref{fig:summary}), resulting in a set of ``superclusters'' $\tilde C^{(n)}\in\mathbb R^{D\times K}$ representing regions of the feature space that are typically occupied by the centroids of correct eSPA models. In the figure \ref{fig:summary} schematic these superclusters are displayed as black circles in panel (b). To help ensure that the relative importance of each feature that is learned by the individual (correct) eSPA models is preserved, $\bar X^{(n)}$ is scaled by $\sqrt{W^{(n)}}$ (step (ii) in figure \ref{fig:summary}). This is done after after min-max scaling so that every feature dimension has a range of $[0,1]$, reflecting the range over which $W^{(n)}$ was first learnt. This makes $k$-means clustering with the Euclidean metric on $\tilde X^{(n)}=\bar X^{(n)}\odot\sqrt{W^{(n)}}$ equivalent to clustering on $\bar X^{(n)}$ with the weighted-Euclidean metric, i.e. the same metric eSPA uses for clustering but without the additional classification loss term. To choose the number of clusters $K$ to use, initially the gap statistic \cite{Tibshirani2001} was employed for each lead time. However, since the number of clusters across lead times was always observed to be close to a median value of 12, a constant $K=12$ was used for all lead times. 

Once the set of supercluster centroids is obtained, the clustering quality can be further improved by letting the assignment matrix $\tilde\Gamma$ be fuzzy rather than discrete, i.e. $\tilde\Gamma\in\Omega_\Gamma$ where $\Omega_\Gamma=\left\{\Gamma_{kt}\in[0,1]\forall\, k,t:\sum_{k=1}^K\Gamma_{kt}=1\forall\, t\right\}$, and solving the following quadratic program for $\tilde \Gamma^{(n)}$ at each lead time,
\begin{linenomath*}
\begin{equation} \label{eqn:Gamma}
\mathcal{L}\left(\tilde \Gamma^{(n)}\right)=\frac{1}{D\bar T}\sum_{i=1}^{\bar T}\sum_{d=1}^D\left(\tilde X^{(n)}_{di}-\sum_{k=1}^K\tilde C^{(n)}_{dk}\tilde\Gamma^{(n)}_{ki}\right)^2\rightarrow \min_{\tilde\Gamma^{(n)}\in\Omega_\Gamma},
\end{equation} 
\end{linenomath*}
which can be efficiently solved using the spectral projected gradient method described in \citeA{Gerber2020} (step (iv) of figure \ref{fig:summary}). This results in a matrix of fuzzy assignments for each instance in $\tilde X^{(n)}$, however since there can be multiple instances associated with the same target date the assignments $\tilde\Gamma^{(n)}_{:,i}$ are averaged over all instances $i$ corresponding to the same target date $t=1,\ldots,T$ (where $T=276$ for the evaluation period of January 2002 to December 2024) and then renormalised to give a matrix $\Gamma^{(n)}$ of fuzzy cluster assignments over target dates, which we will call the affiliation matrix (shown in panel (d) of figure \ref{fig:summary}). By making the affiliation matrix fuzzy in this way, we are able to better capture the diversity across ENSO events while only relying on a minimal set of superclusters.

Once $\Gamma^{(n)}$ has been calculated, the conditional probability of observing class $m$ given assignment to cluster $k$ on target date $t$ can be calculated by solving the following (convex) nonlinear program for $\Lambda^{(n)}$:
\begin{linenomath*}
\begin{equation} \label{eqn:Lambda}
\mathcal{L}\left(\Lambda^{(n)}\right)=-\frac{1}{T}\sum_{t=1}^{T}\sum_{m=1}^M\Pi^{(n)}_{mt}\log\left(\sum_{k=1}^K\Lambda^{(n)}_{mk}\Gamma^{(n)}_{kt}\right)\rightarrow \min_{\Lambda^{(n)}\in\Omega_\Lambda}.
\end{equation} 
\end{linenomath*}
Here $\Lambda^{(n)}\in\Omega_\Lambda$ is the matrix of conditional probabilities for lead time $n$, which are constrained to be in the feasible set $\Omega_\Lambda=\{\Lambda_{mk}\in[0,1]\forall m,k:\sum_{m=1}^M\Lambda_{mk}=1\forall k\}$. In the figure \ref{fig:summary} schematic these conditional probabilities are shown in panel (e). Note that this is the same cross-entropy loss functional that appears in eSPA, but without the application of Jensen's inequality as here $\Gamma^{(n)}$ is not assumed to be discrete. This nonlinear program can be efficiently solved using an interior-point method such as Ipopt \cite{Wachter2006} (step (v) of figure 1).

\begin{figure}[t]
\centering
\begin{tikzpicture}
    \node (img1) {\includegraphics[width=\textwidth]{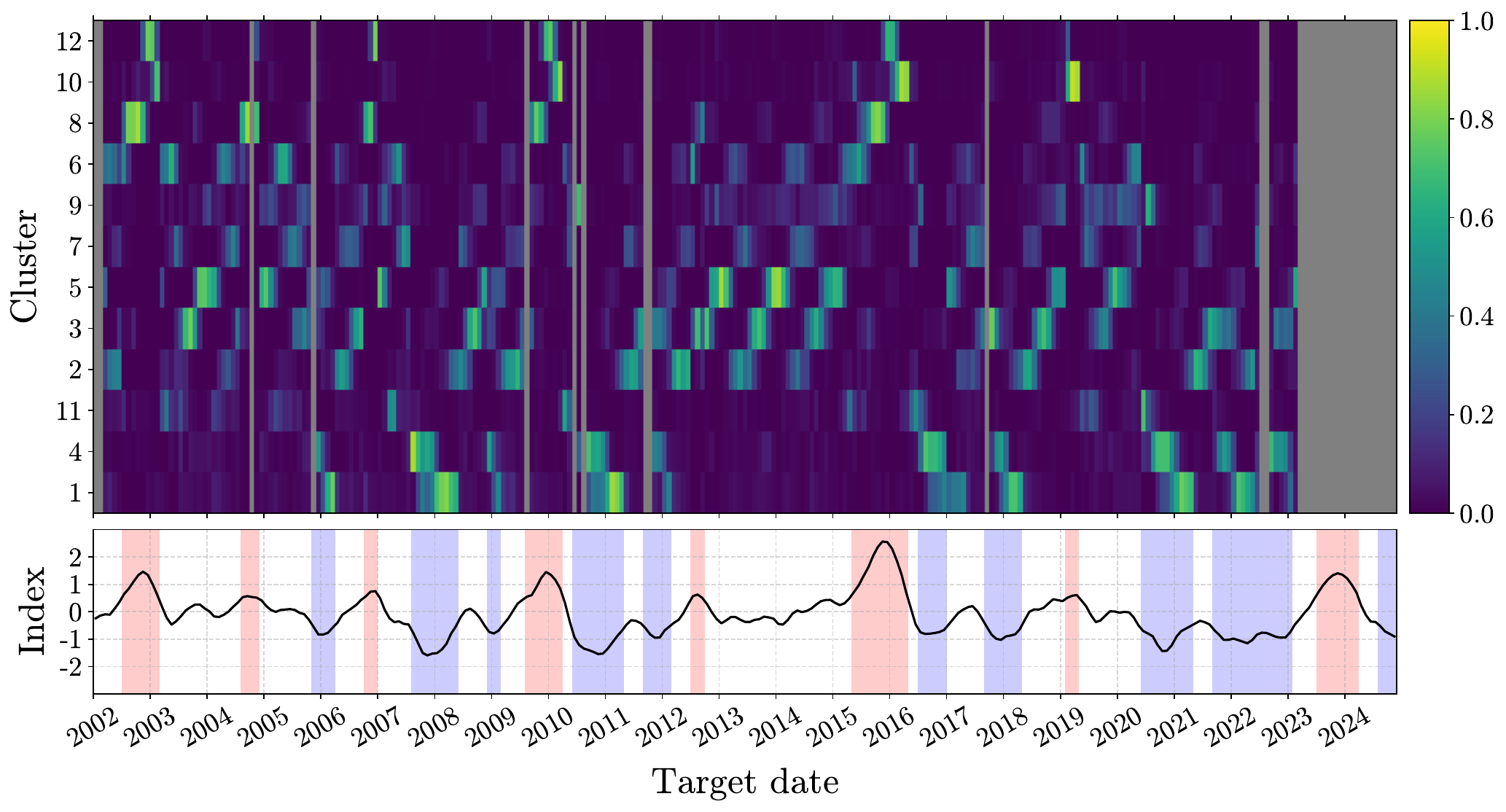}};
    \node[anchor=north east, xshift=-30pt, yshift=-8pt] at (img1.north east) {(a)};
    \coordinate (b) at ($ (img1.north east)!0.64!(img1.south east) $);
    \node[anchor=north east, xshift=-30pt, yshift=0pt] at (b) {(b)};
  \end{tikzpicture}
\caption{The affiliation probabilities $\Gamma^{(n)}$ vs. target date for a lead time of $n$=3 months over the evaluation period of January 2002 to December 2024 (a), and the 3-month running average of the Ni\~no3.4 index over this same period (b). The background shading in (b) corresponds to target dates where the Ni\~no3.4 index is $\ge$0.5$^\circ$ (red) or $\le-$0.5$^\circ$ (blue).}
\label{fig:affiliations}
\end{figure}

Figure \ref{fig:affiliations} shows a plot of $\Gamma^{(n)}$ vs. target date for $n=3$ months lead time. The $y$-axis of figure \ref{fig:affiliations}(a) corresponds to each of the 12 superclusters for this lead time, which are ordered by the expected value of their conditional probabilities $\Lambda^{(n)}_{:,k}$. The grey shading indicates target dates for which there are no correct predictions at this lead time, either because no forecasts were issued for those target dates at this lead time (i.e. at either end of the evaluation period) or because none of the 50 eSPA models in the ensemble correctly predicted the phase of ENSO on those target dates at this lead time. By ordering the $y$-axis by the expected value of each supercluster (i.e. from -1 to 1), clear patterns emerge in the sequence of preferred transitions between superclusters that correlate strongly with the Ni\~no3.4 index. These will be explored explicitly in section \ref{subsec:transitions} below. 

\begin{figure}[t]
\centering
\noindent\includegraphics[width=\textwidth]{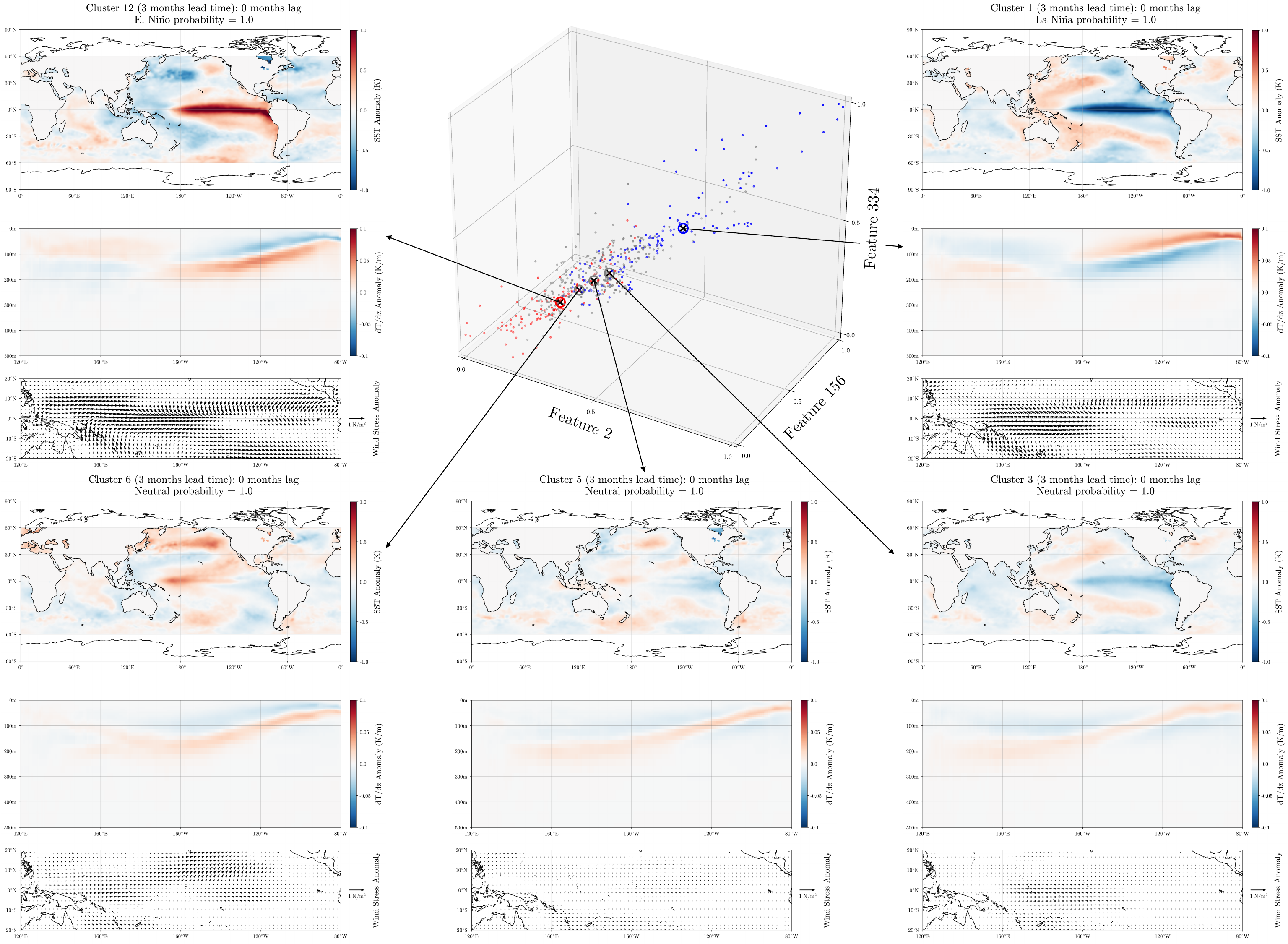}
\caption{Visualisations of the lag-0 composites of SST, $\mathrm dT/\mathrm dz$ and wind stress corresponding to superclusters 12, 6, 5, 3 and 1 at $n$=3 months lead time, as well as the projection of the centroids for these clusters onto the 3 most important feature dimensions at this lead time. The projection of the features $X$ onto these 3 dimensions is also shown, with each point coloured by its corresponding label, i.e. the phase of the Ni\~no3.4 index in 3 months time.}
\label{fig:composites}
\end{figure}

Following the methodology introduced in \citeA{Groom2024}, the centroid $\tilde C^{(n)}_{:,k}$ for each supercluster $k=1,\dots,12$ at lead time $n$ can be used to construct a composite pattern in the original SST, $\mathrm dT/\mathrm dz$ and wind stress fields. This involves first converting to the unweighted centroids $\bar C^{(n)}=\tilde C^{(n)}\odot\left(1/\sqrt{W^{(n)}}\right)$, and then rescaling to the range of the original feature dimensions (i.e. the principal components of each field). At this point the composites for each field $f=1,2,3$ (where $f=1$ is SST, $f=2$ is $\mathrm dT/\mathrm dz$ and $f=3$ is wind stress), can be generated as
\begin{linenomath*}
\begin{equation} \label{eqn:composites}
\mathcal C_k^{(n)}(\mathbf x,l,f)=\sum_{d=1}^{N_f}\bar C^{(n)}_{(d_f+d-1),k}\cdot\mathrm{EOF}_d(\mathbf x,l)
\end{equation} 
\end{linenomath*}
where $\mathrm{EOF}_d(\mathbf x,l)$ is the $d$-th EOF of the $f$-th field (a function of both the spatial coordinates $\mathbf x$ and the lag dimension $l=0,\ldots,11$), $N_f$ is the total number of PCs for field $f$ and $d_f$ is the starting index of the dimensions of field $f$ in the overall list of dimensions $d=1,\ldots,D$. Note: for the $\mathrm dT/\mathrm dz$ there is an extra spatial dimension $z$ in $\mathbf x$ and for the wind stress field there are two components $c\in\{x,y\}$.

Figure \ref{fig:composites} shows a scatter plot of the features $X$ projected onto the 3 most important dimensions at a lead time of $n=3$ months (dimensions 2, 156 and 334 of the vector $W^{(3)}$), with each point $t$ coloured by (the expected value of) its corresponding label, given by $\Pi^{(3)}_{:,t}$, i.e. the phase of the Ni\~no3.4 index 3 months ahead of time $t$. Superimposed on this plot are the centroids for the 3-month lead time superclusters 1, 3, 5, 6 and 12, whose corresponding spatial composites are shown in the surrounding panels. Note that each point in this phase space actually corresponds to a spatiotemporal pattern due to the 12-month embedding, however only the lag-0 spatial patterns are shown in figure \ref{fig:composites}. Also note that although these supercluster centroids were not fit on the features $X$, but rather centroids from eSPA models whose predictions were verified to be correct, an affiliation probability for each instance $X_{:,t}$ ($t=1,\ldots,T$) to each supercluster $\tilde C^{(n)}_{:,k}$ ($k=1,\ldots,12$) can still be calculated by solving equation \ref{eqn:Gamma} using $X$ in place of $\tilde X^{(n)}$. Figure S2 of the supplementary material shows a plot of these alternative affiliation probabilities for 3 months lead time.

\subsubsection{Transition matrices} \label{subsec:transitions}

As mentioned above, the sequences of affiliation probabilities over superclusters exhibit transitions in time that are reflective of preferred pathways between clusters for a given forecast lead time. This observation can be made more explicit by calculating $K\times K$ transition probability matrices $P^{(n)}$ for each lead time. By convention, these are row-stochastic, i.e. $P^{(n)}\in\Omega_P$ where $\Omega_P=\{P_{ij}\in[0,1]\forall i,j:\sum_{j=1}^KP_{ij}=1\forall i\}$. In other words, $P^{(n)}_{ij}$ is the probability of transitioning to cluster $j$ at time $t+1$, given that the system is in cluster $i$ at time $t$. Given that $\Gamma^{(n)}$ is in general fuzzy, $P^{(n)}$ can be solved for using maximum likelihood estimation (MLE), which results in the following (convex) nonlinear program:
\begin{linenomath*}
\begin{equation} \label{eqn:P}
\mathcal{L}\left(P^{(n)}\right)=-\frac{1}{T-1}\sum_{t=1}^{T-1}\sum_{j=1}^K\Gamma^{(n)}_{j,t+1}\log\left(\sum_{i=1}^KP^{(n)}_{ij}\Gamma^{(n)}_{it}\right)\rightarrow \min_{P^{(n)}\in\Omega_P}.
\end{equation} 
\end{linenomath*}
In practice, equation \ref{eqn:P} can be solved using the same interior-point method that is used to solve for $\Lambda^{(n)}$ in equation \ref{eqn:Lambda} by transposing $P^{(n)}$ and converting the row-stochastic constraints $\Omega_P$ to column-stochastic constraints $\Omega_{P^\mathsf{T}}$. Note that equation \ref{eqn:P} can be extended to handle cases where there are gaps in the sequence (e.g. due to missing affiliations on certain target dates) by splitting the sum over $t$ into separate sums over piecewise continuous segments.

\begin{figure}[t]
\centering
\begin{tikzpicture}
    \node (img1) {\includegraphics[width=\textwidth]{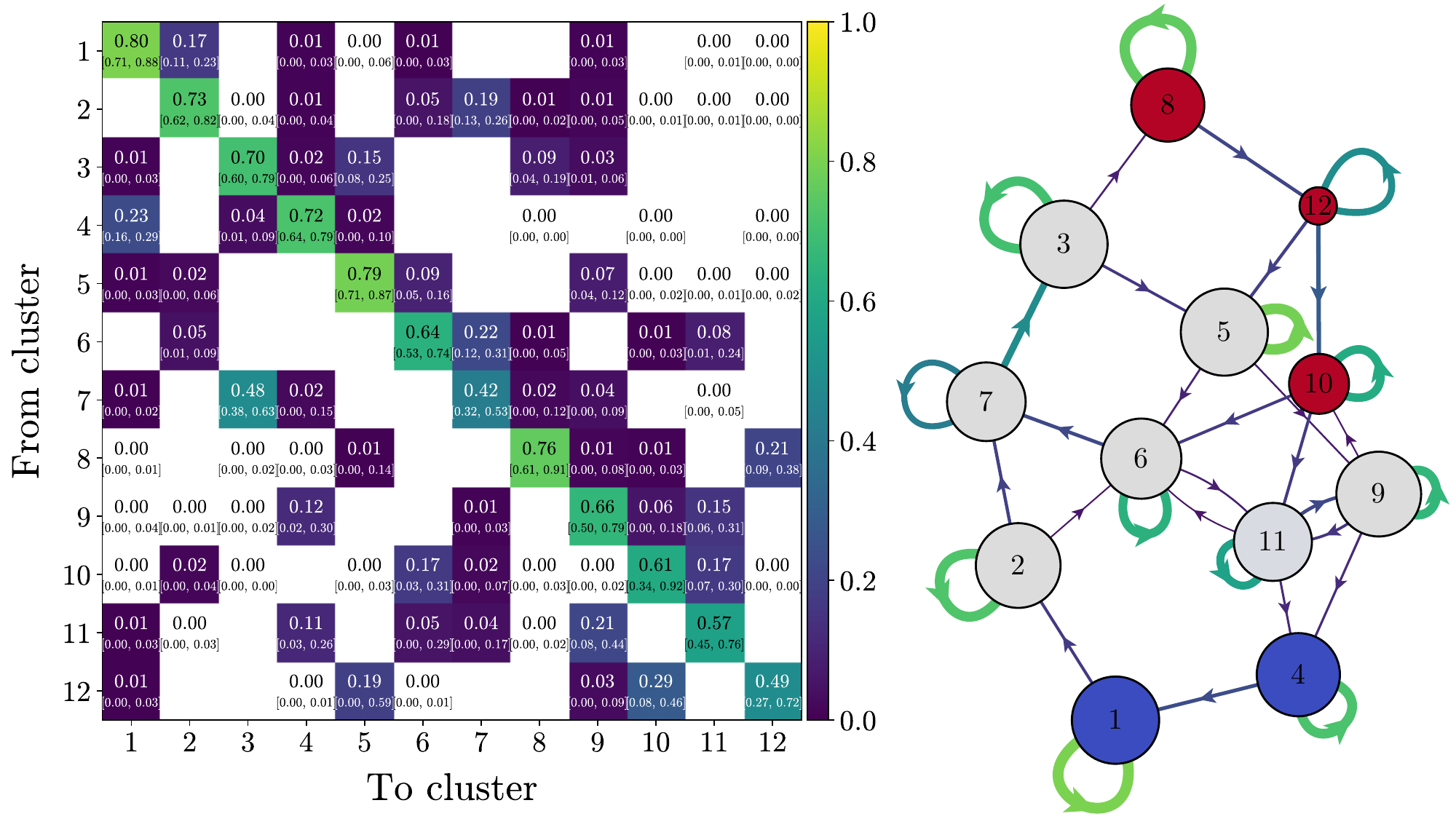}};
    \node[anchor=north west, xshift=0pt, yshift=-4pt] at (img1.north west) {(a)};
    \coordinate (b) at ($ (img1.north west)!0.66!(img1.north east) $);
    \node[anchor=north east, xshift=0pt, yshift=-4pt] at (b) {(b)};
  \end{tikzpicture}
\caption{The transition probability matrix $P^{(n)}$ for a lead time of $n$=3 months, including 95\% confidence interval estimates (a), and a visualisation of the matrix as a directed cyclic graph (b), where only the probabilities $P^{(n)}_{ij}$$\ge$0.05 are shown as edges. The nodes are coloured by the expected value of the conditional probabilities $\Lambda^{(n)}_{:,k}$ for each cluster, and the edges are coloured (and sized) by the transition probabilities $P^{(n)}_{ij}$.}
\label{fig:transitions}
\end{figure}

Figure \ref{fig:transitions}(a) shows a plot of $P^{(n)}$ for $n=3$ months lead time. Each cell $(i,j)$ of the heatmap shows the probability $P^{(3)}_{ij}$, along with 95\% confidence intervals (CIs) on that probability obtained via bias-corrected and accelerated bootstrapping (using the jackknife method to estimate the acceleration parameter). Cells that are white but still contain an annotation are those that have a maximum likelihood estimate of $P^{(3)}_{ij}=0$ but which have a non-zero upper-bound on their 95\% confidence interval, which cells that are entirely white are those with both a zero MLE probability and a zero upper-bound on their 95\% CI. Figure \ref{fig:transitions}(b) shows a visualisation of the matrix in figure \ref{fig:transitions}(a) as a directed cyclic graph (i.e. a Markov chain), where only the probabilities $P^{(3)}_{ij}\ge0.05$ are shown as edges. This visualisation makes it easier to see the dominant pathways between clusters. For example, transitions from La Ni\~na clusters to El Ni\~no clusters must occur via neutral clusters and vice versa.

Formulating the $t\rightarrow t+1$ transitions at each lead $n$ as a (discrete-time) Markov chain also allows for the computation of various important properties of the Markov chain that have a clear physical meaning. Calculating the eigenvalues $\lambda_1,\ldots,\lambda_K$ of $P^{(n)}$, sorted by their magnitude from largest to smallest, allows one to calculate the relaxation time as
\begin{linenomath*}
\begin{equation} \label{eqn:relax}
r=\frac{1}{1-\vert\lambda_2\vert},
\end{equation} 
\end{linenomath*}
which gives a global measure of the average persistence. For $n=3$, $r=6.3$ months, which is very close to the $e$-folding time of the Ni\~no3.4 index of 6.1 months (see figure S3 of the supplementary material), indicating that this set of clusters capture the dynamics of ENSO quite well. In general, the relaxation time $r$ for each lead time $n$ is close to the $e$-folding time, with the average relaxation time across all lead times equal to 6.0 months.

If a Markov chain is irreducible and aperiodic, then (as per the Perron--Frobenius theorem) there is a unique stationary distribution $\pi$ and the $a$-th power of the transition matrix, $P^a$, converges to a rank-one matrix where each row is the stationary distribution, i.e.
\begin{linenomath*}
\begin{equation} \label{eqn:stationary}
\lim_{a\rightarrow\infty}P^a=\mathbf 1\pi,
\end{equation} 
\end{linenomath*}
where $\mathbf 1$ is a column vector with all entries equal to 1. Equivalently, $\pi$ satisfies $P^\mathsf{T}\pi=\pi$ and can therefore be calculated as the leading eigenvector of $P^\mathsf{T}$ (whose eigenvalues are sorted by their magnitude, giving $\lambda_1=1$). From the stationary distribution, one can calculate the mean return time for each state $i$ as $m_i=1/\pi_i$. In figure \ref{fig:transitions}(b), the size of each node has been scaled by the inverse of its respective mean return time, i.e. smaller nodes have a longer mean return time (e.g. $m_{12}=43.9$ months for node 12) whereas larger nodes have a shorter mean return time (e.g. $m_1=8.0$ months for node 1). Other quantities which can be easily calculated from $P$ and/or $\pi$ are the mean exit time from state $i$,
\begin{linenomath*}
\begin{equation} \label{eqn:exit}
e_i=\frac{1}{1-P_{ii}},
\end{equation} 
\end{linenomath*}
and the mean first passage time from state $i$ to state $j$ ($i\ne j)$,
\begin{linenomath*}
\begin{equation} \label{eqn:mfpt}
H_{ij}=\frac{Z_{ii}-Z_{ij}}{\pi_j},
\end{equation} 
\end{linenomath*}
where $Z=\left(I-P+\mathbf1\otimes\pi^\mathsf{T}\right)^{-1}$ is the fundamental matrix of the Markov chain (which exists provided the chain is finite, irreducible and aperiodic). Using the $n=3$ months lead time Markov chain as an example, the mean exit times range from $e_1=5.0$ months to $e_{12}=1.9$ months. Similarly, the minimum and maximum mean first passage times are $H_{7,3}=5.6$ months and $H_{9,12}=86.8$ months respectively. All of these quantities can be calculated/displayed for every Markov chain in the online interactive visualisations accompanying this article.

In addition to calculating $t\rightarrow t+1$ transition probabilities at each lead $n$, it is also possible to calculate transition probabilities $G^{(n)}$ between leads $n$ and $n-1$ for a fixed time $t$, i.e. by solving
\begin{linenomath*}
\begin{equation} \label{eqn:G}
\mathcal{L}\left(G^{(n)}\right)=-\frac{1}{T}\sum_{t=1}^{T}\sum_{j=1}^K\Gamma^{(n-1)}_{j,t}\log\left(\sum_{i=1}^KG^{(n)}_{ij}\Gamma^{(n)}_{it}\right)\rightarrow \min_{G^{(n)}\in\Omega_P}.
\end{equation} 
\end{linenomath*}
This represents the probability of the lead $n-1$ forecast being assigned to cluster $j$ for target time $t$, given that the lead $n$ forecast was assigned to cluster $i$ for target time $t$ (i.e. the forecast for that same target date that was issued the previous month). By concatenating the cluster indices across all lead times into a global list $\{1,\ldots,\mathcal K\}$, each matrix $G^{(n)}$ can be seen as forming one block of a $\mathcal K \times\mathcal K$ upper block-diagonal matrix $\mathcal G$. In other words, $\mathcal G$ is a directed acyclic graph---specifically, a Bayesian network---representing the transitions between cluster states with descending lead time for a given target date. As before with calculating $P^{(n)}$, gaps in the sequence can be handled by restricting the sum over $t$ in equation \ref{eqn:G} to those times $t$ for which both $\Gamma^{(n)}_{:,t}$ and $\Gamma^{(n-1)}_{:,t}$ are defined. It is also important to note that if $\Gamma^{(n-1)}_{:,t}$ in equation \ref{eqn:G} is swapped for $\Pi_{:,t}$ then $G^{(n)}$ is equivalent to (the transpose of) $\Lambda^{(n)}$. Indeed, $\Lambda$ can itself be interpreted as a Bayesian network between two stochastic processes $X(t)$, discretised into states $S^X=\{S^X_1,\ldots,S^X_K\}$, and $Y(t)$, discretised into states $S^Y=\{S^Y_1,\ldots,S^Y_M\}$, with corresponding probabilistic representations $\Gamma^X(t)=\Gamma(t)$ and $\Gamma^Y(t)=\Pi(t)$ \cite{Horenko2020}. Therefore, for completeness, $\Lambda^{(1)}$ can also be added to $\mathcal G$, making it a $(\mathcal K+M) \times(\mathcal K+M)$ matrix.

\begin{figure}[H]
\centering
\begin{tikzpicture}
    \node (img1) {\includegraphics[height=0.98\textheight]{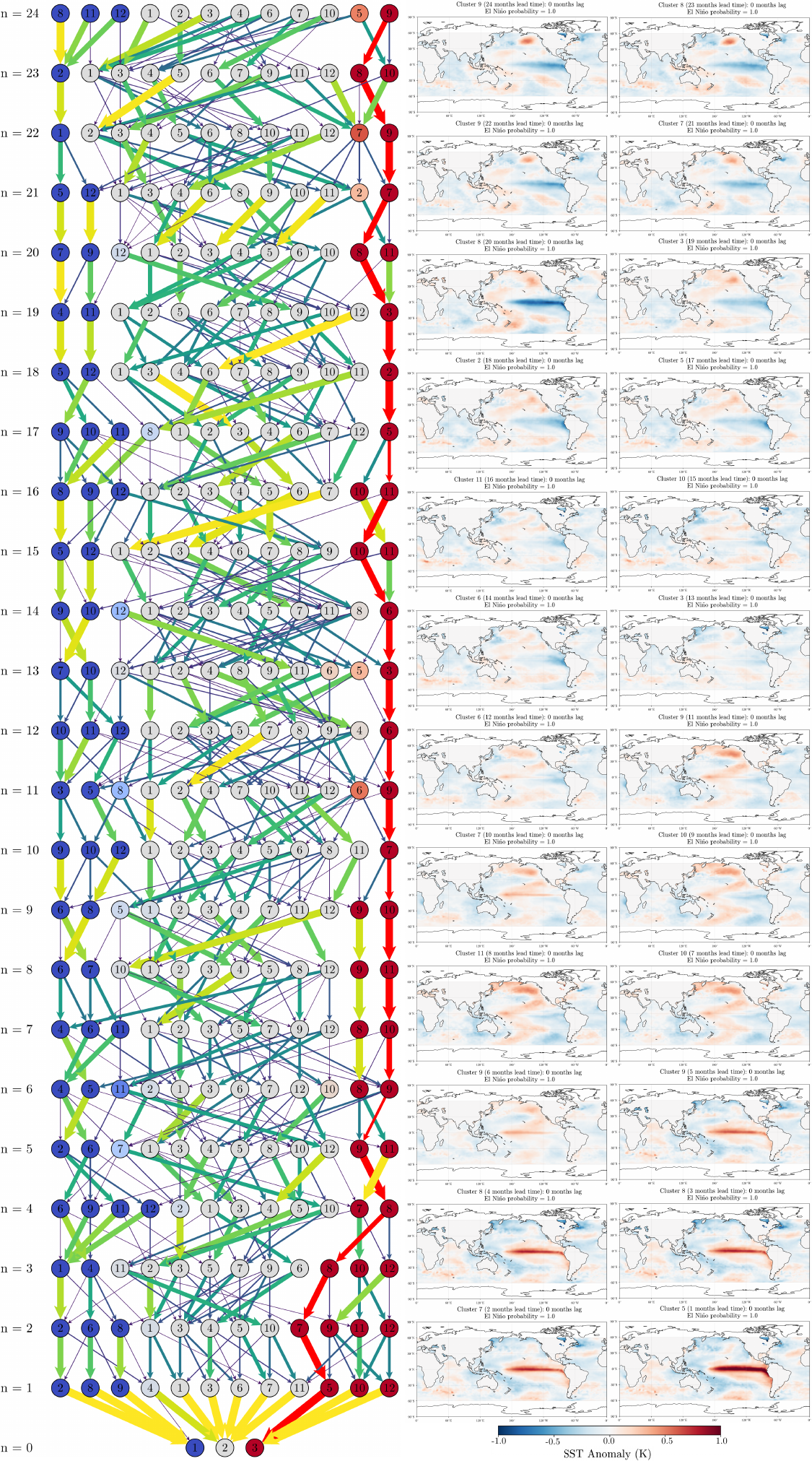}};
    \node[anchor=south west, xshift=28pt, yshift=2pt] at (img1.south west) {(a)};
    \node[anchor=south west, xshift=186pt, yshift=2pt] at (img1.south west) {(b)};
  \end{tikzpicture}
  \vspace{-1em}
\caption{(a) The Bayesian network $\mathcal G$ of lead $n$$\rightarrow$$n$$-1$ transition probabilities. (b) SST composites for each cluster on the most probable path between node 9 ($n$$=$24 months) to node 3 ($n$$=$0 months), highlighted in red.}
\label{fig:vertical}
\end{figure}

The transition probabilities represented by the Bayesian network $\mathcal G$ therefore give a measure of the consistency of the superclusters across different lead times, where in the ideal case the conditional probabilities calculated by traversing the graph from node $k\in\{1,\ldots,K^{(n)}\}$ at lead $n$ to node $m\in\{1,\ldots,M\}$ at lead 0 (i.e. class $m$) would be identical to $\Lambda^{(n)}_{mk}$, i.e. the conditional probability that is calculated directly between cluster $k$ at lead $n$ and class $m$. In addition, $\mathcal G$ can be used to find the most probable path between node $i$ at lead $n$ and node $j$ at lead $l<n$ by computing the Dijkstra path between these two nodes, with the cost of edge each set to $-\log(p)$ where $p$ is the edge probability. Figure \ref{fig:vertical} shows a plot of $\mathcal G$, with the most probable path from node 9 at $n=24$ months to node 3 at $n=0$ months (i.e. the El Ni\~no class) highlighted in red. Alongside this plot are the SST composites for each cluster on the most probable path for $n=24,23,\ldots,1$ months. This sequence of composites represents the canonical 24-month pathway to an El Ni\~no event that is captured by the set of superclusters. A similar canonical pathway to a La Ni\~na event can also be generated, for example by computing the Dijkstra path between node 8 at $n=24$ months to node 1 at $n=0$ months (shown in figure S4 of the supplementary material). For any particular target date in the evaluation period, the specific pathway taken to the event at time $t$ is determined by the affiliations $\Gamma_{:,t}^{(n)}$ at each lead $n$, which are fuzzy in general---a reflection of the diversity among events. A similar sequence of composites can be therefore generated for each particular event by calculating the reconstructed feature vector at each lead for time $t$ as
\begin{linenomath*}
\begin{equation} \label{eqn:Xhat}
\hat X^{(n)}_{:,t}=\tilde C^{(n)}\cdot\Gamma_{:,t}^{(n)}
\end{equation} 
\end{linenomath*}
and then constructing the composite patterns in the original SST, $\mathrm dT/\mathrm dz$ and wind stress fields using equation \ref{eqn:composites}. This enables case studies to be conducted of the pathways for particular events of interest in the evaluation period, which will be explored in detail in section \ref{subsec:case-studies}.

\subsubsection{Imputing missing affiliations} \label{subsec:imputing}
Aside from providing insight into the dynamics that are accurately captured by the distilled models, both the $t\rightarrow t+1$ and $n\rightarrow n-1$ transition matrices can also be used to impute missing affiliations (i.e. on target dates for which there are no correct predictions) at each lead time. The process for doing this imputation is as follows. First, gaps in the affiliation sequence for a given lead time $n$ are found and the affiliations either side of the gap are stored. For simplicity we will just consider the case of a 1-month gap for now, in which case the stored affiliations are $\Gamma_{:,t-1}^{(n)}$ and $\Gamma_{:,t+1}^{(n)}$. Next, both the $t\rightarrow t+1$ transition probability matrix $P^{(n)}$ and its reverse-time equivalent $P_r^{(n)}$ (calculated by solving equation \ref{eqn:P} with $\Gamma_{j,t-1}^{(n)}$ instead of $\Gamma_{j,t+1}^{(n)}$) are used to evolve the affiliations $\Gamma_{:,t-1}^{(n)}$ and $\Gamma_{:,t+1}^{(n)}$ to time $t$, e.g. 
\begin{linenomath*}
\begin{equation} \label{eqn:forward}
\Gamma_{:,t}^{(f)}=P^\mathsf{T}\Gamma_{:,t-1}
\end{equation} 
\end{linenomath*}
is the forward evolution and 
\begin{linenomath*}
\begin{equation} \label{eqn:backward}
\Gamma_{:,t}^{(r)}=P_r^\mathsf{T}\Gamma_{:,t+1}
\end{equation} 
\end{linenomath*}
is the backward evolution. The evolved affiliations are then combined with $\Lambda^{(n)}$ to give the predicted class probability vector for time $t$, i.e.
\begin{linenomath*}
\begin{equation} \label{eqn:probs}
\hat\Pi_{:,t}=\Lambda\Gamma_{:,t}
\end{equation} 
\end{linenomath*}
is calculated for both $\Gamma_{:,t}^{(f)}$ and $\Gamma_{:,t}^{(r)}$. If one or both of the predicted class probability vectors at time $t$ correctly predicts the true class with majority probability then the affiliation is saved (in the case where both forward and backward predictions are correct the one with the higher majority probability is saved), otherwise the gap in the sequence is kept. For gaps of longer than 1 month this process is repeated iteratively.

Once this ``horizontal'' imputation of affiliations at each lead $n$ has been performed, a second, ``vertical'' imputation is performed using the $n\rightarrow n-1$ transition matrices $G^{(n)}$ and their equivalent $n\rightarrow n+1$ transition matrices $G_r^{(n)}$ (i.e. calculated by solving equation \ref{eqn:G} with $\Gamma_{j,t}^{(n+1)}$ instead of $\Gamma_{j,t}^{(n-1)}$). Specifically, for each time $t$ we look for any missing affiliations in the sequence $\{\Gamma_{:,t}^{(1)},\ldots,\Gamma_{:,t}^{(24)}\}$ and, as before, store the affiliations either side of any gaps that are found, i.e. $\Gamma_{:,t}^{(n+1)}$ and $\Gamma_{:,t}^{(n-1)}$ for simplicity. Next, $G^{(n+1)}$ and $G_r^{(n-1)}$ are used to evolve $\Gamma_{:,t}^{(n+1)}$ and $\Gamma_{:,t}^{(n-1)}$ to lead $n$ in the same manner as equations \ref{eqn:forward} and \ref{eqn:backward}. The evolved affiliations are then combined with $\Lambda^{(n)}$ using equation \ref{eqn:probs} to give the predicted class probability vector $\hat\Pi^{(n)}_{:,t}$, with the same criteria as before used to determine whether either of the evolved affiliations should be saved. Again, for gaps of longer than 1 month this process is repeated iteratively. The final result of this imputation process for $n=3$ months lead time is shown in figure \ref{fig:affiliations-imputed}. When compared with figure \ref{fig:affiliations}, there are now only 2 gaps remaining in the sequence vs. 8 gaps in the original sequence. 

\begin{figure}[t]
\centering
\includegraphics[width=\textwidth]{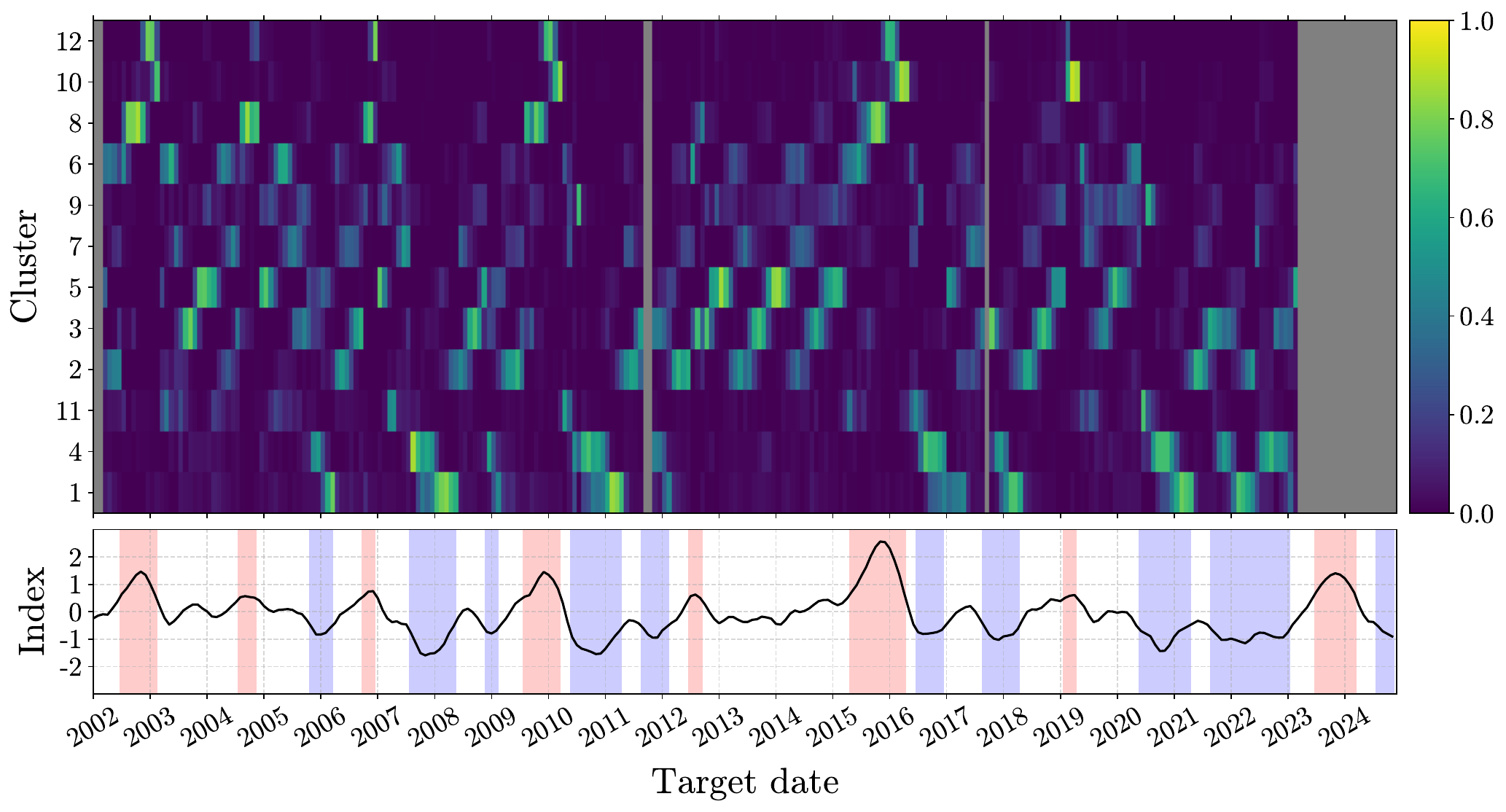}
\caption{The imputed affiliation probabilities $\Gamma^{(n)}$ vs. target date for a lead time of $n$=3 months. See the figure \ref{fig:affiliations} caption for further details.}
\label{fig:affiliations-imputed}
\end{figure}

\subsection{Composites of diagnostic fields}
Once an affiliation sequence $\Gamma^{(n)}$ has been calculated (and any missing values imputed), it is possible to use this sequence to generate composites for any gridded data that is available over the same range of target dates that $\Gamma^{(n)}$ is defined for. This is done in a two-step procedure. First, composites for each cluster $k$ at the desired lead time $n$ are generated by averaging the gridded data $F(\mathbf x,t)$, weighted by each cluster's affiliation probabilities, i.e.
\begin{linenomath*}
\begin{equation} \label{eqn:future-composites1}
\mathcal C_k^{(n)}(\mathbf x)=\frac{\sum_{t=1}^T\Gamma^{(n)}_{kt}\odot F(\mathbf x,t)}{\sum_{t=1}^T\Gamma^{(n)}_{kt}}.
\end{equation} 
\end{linenomath*}
Then, for each target date $t$, these cluster composites $\mathcal C_k^{(n)}$ are blended according to that target date's affiliation probability distribution:
\begin{linenomath*}
\begin{equation} \label{eqn:future-composites2}
\mathcal C^{(n)}(\mathbf x, t)=\frac{\sum_{k=1}^K\Gamma^{(n)}_{kt}\odot \mathcal C_k^{(n)}(\mathbf x)}{\sum_{k=1}^K\Gamma^{(n)}_{kt}}.
\end{equation} 
\end{linenomath*}

This procedure allows for the visualisation of not only the precursors of particular events of interest, but also the future state that is being predicted for each target date at a given lead. Figure \ref{fig:future-composite} shows an example of this, where the diagnostic field is (detrended) SST on the original $0.25^\circ\times0.25^\circ$ grid for a lead time of $n=24$ months ahead. Figure \ref{fig:future-composite}(a) shows the individual cluster composites of this field that are generated from equation \ref{eqn:future-composites1}, while figure \ref{fig:future-composite}(b) shows the combined composite generated from equation \ref{eqn:future-composites2} for a target date of January 2016. Compared to the (detrended) ground truth SST field for January 2016 shown in figure \ref{fig:future-composite}(c), the patterns are broadly quite similar (the area-weighted pattern correlation is 0.75), although the amplitude of the anomalies in the composite is reduced relative to the ground truth. It is important to stress that, although this data was used to derive the training inputs for the eSPA models, here it is being treated in a diagnostic rather than prognostic manner. Similar composites are able to be generated over the same range of dates for fields that are not related to the training inputs in any way, for example subsurface temperature as shown in figure S5 of the supplementary material.

\begin{figure}[t]
\centering
\begin{tikzpicture}
    \node (img1) {\includegraphics[width=\textwidth]{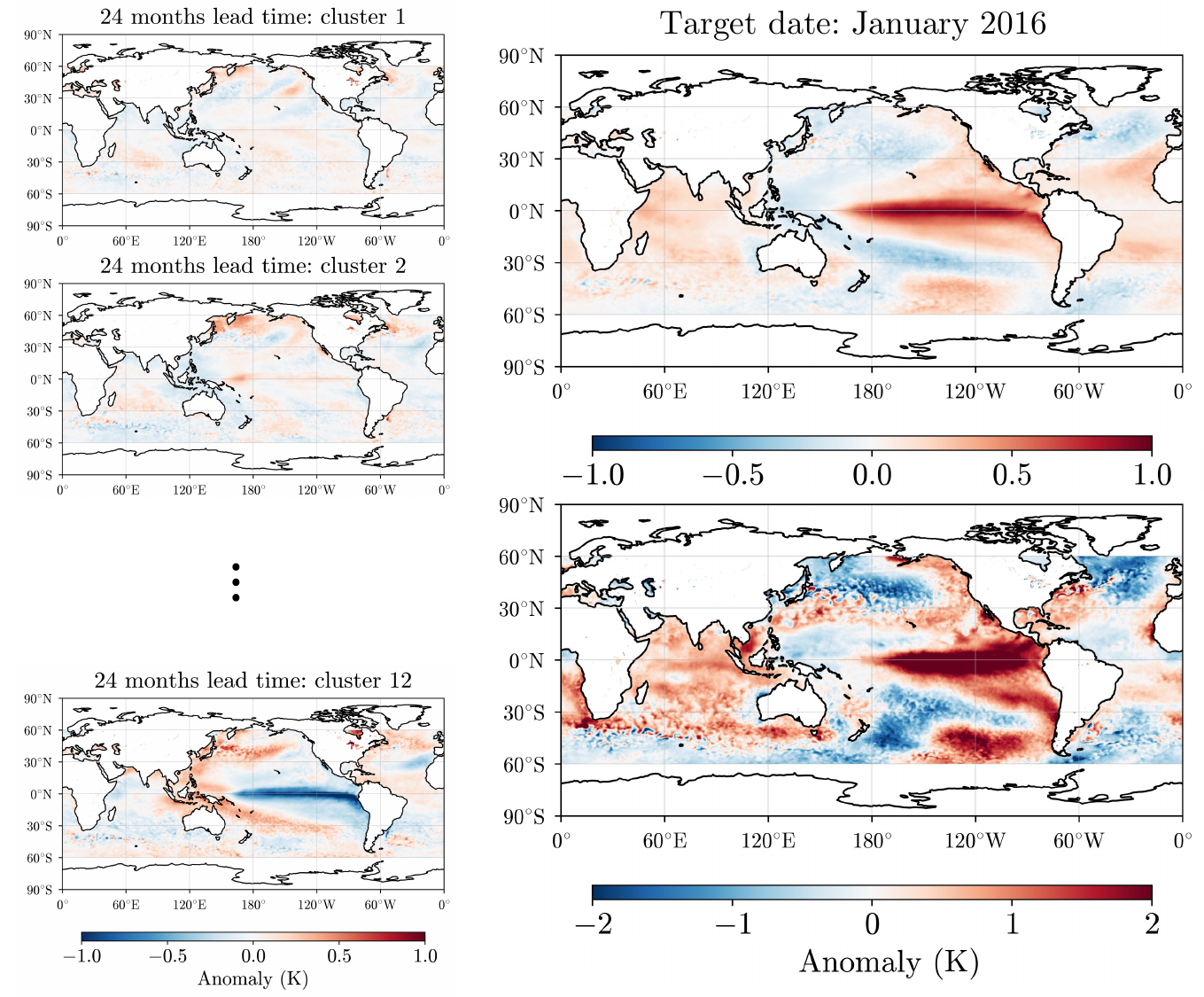}};
    \node[anchor=north west, xshift=6pt, yshift=5pt] at (img1.north west) {(a)};
    \coordinate (b) at ($ (img1.north west)!0.46!(img1.north east) $);
    \node[anchor=north east, xshift=0pt, yshift=5pt] at (b) {(b)};
    \coordinate (c) at ($ (img1.south west)!0.46!(img1.south east) $);
    \node[anchor=north east, xshift=0pt, yshift=192pt] at (c) {(c)};
  \end{tikzpicture}
\caption{(a) Diagnostic composites of the $n$-month ahead SST field for clusters $1,\ldots,12$ at $n=24$ months. (b) The combined diagnostic composite of 24-month ahead SST for a target date of January 2016. (c) The detrended ground truth SST field for January 2016.}
\label{fig:future-composite}
\end{figure}

\subsection{Skill metrics}
The following metrics are used both for scoring individual eSPA models as well as for assessing the ensemble predictions against ground truth data. The ranked probability score (RPS) is defined as 
\begin{linenomath*}
 \begin{equation} \label{eqn:RPS}
    \mathrm{RPS} = \frac{1}{T}\sum_{t=1}^{T}\sum_{m=1}^{M}\left(\sum_{j=1}^{m}\hat\Pi_{j,t}-\sum_{j=1}^{m}\Pi_{j,t}\right)^2,
 \end{equation}
\end{linenomath*}
where $\hat\Pi_{m,t}$ and $\Pi_{m,t}$ are the predicted and true probabilities for class $m=1,\ldots,M$ and instance $t=1,\ldots,T$ respectively. The RPS thus penalises predictions that are further away from the ground truth more heavily in cases where the classes are ordinal, with a worst-case value of $M-1$. Similarly, the ranked probability skill score (RPSS) is defined as
\begin{linenomath*}
 \begin{equation} \label{eqn:RPSS}
    \mathrm{RPSS} = 1-\frac{\mathrm{RPS}}{\mathrm{RPS}_c}
 \end{equation}
\end{linenomath*}
where $\mathrm{RPS}_c$ denotes the RPS that is obtained when using climatological probabilities for the predictions \cite{Weigel2007}. By definition, a positive RPSS denotes skill relative to climatology, with a value of 1 denoting perfect skill. 

\section{Results} \label{sec:results}

\subsection{Skill assessment} \label{subsec:skill}

To demonstrate that the distillation procedure does not result in a decrease in skill relative to that of the original ensemble of eSPA models, an assessment of the probabilistic skill is performed. Specifically, we compare the ranked probability skill score of predictions $\hat \Pi$, calculated using equation \ref{eqn:probs}, based on 
\begin{enumerate}
    \item the (imputed) affiliation probabilities of centroids $\tilde{X}^{(n)}$ from correct eSPA models,
    \item affiliation probabilities of the most recent feature vector $X_{:,t}$ at each time $t$,
    \item the mean predictions from the original ensemble of eSPA models,
\end{enumerate}
for each target date and lead time in the evaluation period. Note that option 1 relies on knowledge of which eSPA models in the ensemble are correct vs. incorrect and thus is not representative of real-time forecasting skill. It is merely included to indicate the skill that could be obtained if the ensemble of eSPA models was able to be perfectly filtered (for example using an approach similar to the meta-model proposed in \citeA{Groom2025}) and all incorrect models were removed (note: dates for which there are gaps in the affiliation sequence, e.g. due to no correct models, are also not included in the skill assessment for these centroid affiliation probabilities). Option 2 does not rely on such knowledge and therefore represents an estimate of real-time forecasting skill (noting the caveats given above in section \ref{subsec:forecast-procedure} regarding the use of all available data to generate the PCs) since it only relies on relating each new observation $X_{:,t}$ to the existing set of superclusters $\tilde{C}^{(n)}$ to generate a prediction for each lead time $n$ at time $t$. By construction, there are no gaps in these alternate affiliation sequences over observations and therefore this skill assessment includes all dates in the evaluation period. 

\begin{figure}[t]
\centering
\includegraphics[width=\textwidth]{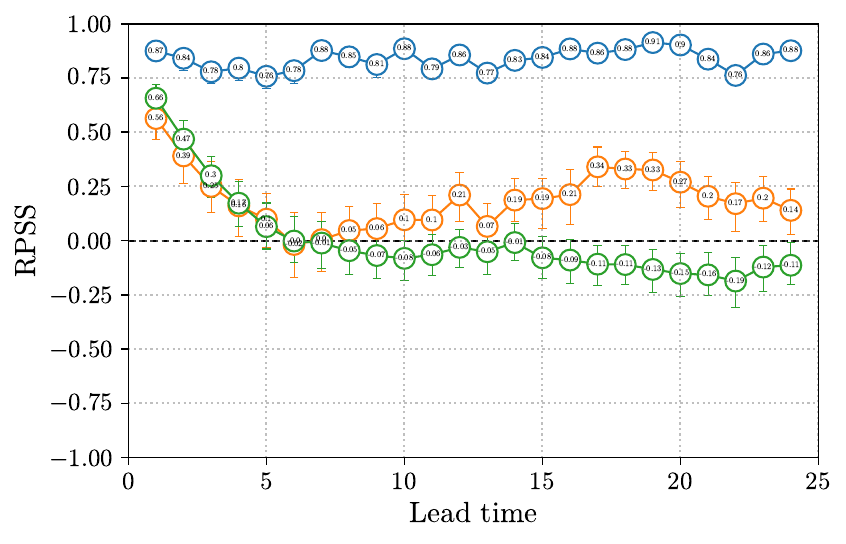}
\caption{Ranked probability skill score vs. lead time, calculated for predictions based on the affiliation probabilities of centroids $\tilde{X}^{(n)}$ from correct eSPA models (blue), affiliation probabilities of features $X$ (orange) as well as the mean predictions from the original ensemble of eSPA models (green). Error bars correspond to 95\% confidence intervals, calculated using bootstrapping.}
\label{fig:skill}
\end{figure}

Figure \ref{fig:skill} shows the RPSS as a function of lead time $n$ for each of these three approaches. The RPSS of the centroid affiliation probabilities, shown in blue, is greater or equal to 0.76 across all lead times (note: this calculation excludes the subset of target dates at each lead time for which there are still missing affiliations after imputation). This shows that the various quantities that are derived from these affiliation sequences are based on predictions that are both very accurate and very confident. The RPSS of the original ensemble of eSPA models, shown in green, dips below 0 for lead times greater than $n=6$ months, and is significantly less than 0 (at the 95\% confidence level) for lead times greater than $n=16$ months. By comparison, the RPSS of the feature affiliation probabilities, shown in orange, remains above 0 for all lead times except $n=6$ months. Furthermore, it is significantly greater than 0 for all lead times except for $5\le n \le 11$ and $n=13$. The slight reduction in skill relative to the original eSPA ensemble at short lead times is attributed to the exclusion of categorical feature dimensions when calculating the assignments in equation \ref{eqn:Gamma}, which (as shown in section \ref{subsec:importance} below) are most important at lead times of $n\le 6$ months due to the persistence of ENSO itself being a relevant predictor at these lead times. While in principle these categorical feature dimensions could be included in equation \ref{eqn:Gamma}, this would result in having to solve a more cumbersome nonlinear program rather than the quadratic program that is solved in the current formulation. Thus, as predictive skill is only an ancillary focus in this paper, that extension is deferred to future work. Similarly, it is possible to make further improvements to the RPSS of the feature affiliation probabilities by including more superclusters than the default choice of $K=12$. However, here we adopt a principle of maximal parsimony and avoid overcomplicating the distillation procedure above what is strictly necessary for our primary aim of interpretability.

\subsection{Feature importance} \label{subsec:importance}
In addition to aggregating clusters from the correct models in the ensemble into superclusters, a similar procedure can be performed for the feature importance vector $W$. Specifically, for each lead time $n$ we construct an aggregated feature importance vector $W^{(n)}$ that is the average of the $W$ vectors of all the correct eSPA models in the ensemble making forecasts at that lead time. Further stratification by target season or class (or both) is also possible. Figure \ref{fig:importance} shows plots of $W^{(n)}$ for lead times of 3, 6, 9, 12, 18 and 24 months, i.e. equivalent plots to those of figure 6 in \citeA{Groom2024} but for an ensemble average rather than individual models (and for features derived using SSA rather than PCA).

\begin{figure}[t] 
\centering
\includegraphics[width=\textwidth]{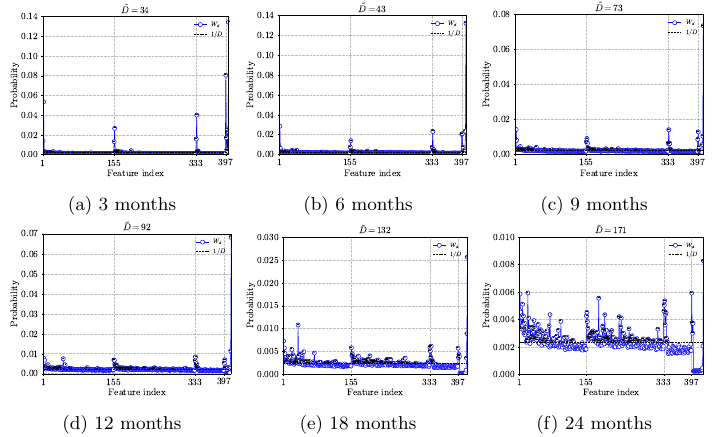}
\caption{The (aggregated) feature importance vector $W^{(n)}$ at various lead times. Features 1-154 are the SSA modes of SST (in order of explained variance), features 155-332 are the SSA modes of $\mathrm dT/\mathrm dz$, features 333-396 are the SSA modes of wind stress, while features $\ge$397 are the Ni\~no3.4 index plus the categorical features containing predictions from earlier lead times.}
\label{fig:importance}
\end{figure}

$W^{(n)}$ exhibits similar trends with lead time to those observed in \citeA{Groom2024}, whereby more importance is placed on the higher-order SSA modes in each field as $n$ increases, indicating that extra-tropical and inter-basin patterns become increasingly important for making successful predictions at longer lead times. The title of each subplot in figure \ref{fig:importance} gives the number of features in $W^{(n)}$ that are greater than the maximum entropy limit of $1/D$ (where $D$ is the total number of features at lead time $n$), i.e.
\begin{linenomath*}
 \begin{equation} \label{eqn:Dtilde}
    \tilde D=\sum_{d=1}^D\mathbbm{1}\left\{W_d>\frac{1}{D}\right\}
 \end{equation}
\end{linenomath*}
where $\mathbbm{1}$ is the indicator function, which is one measure of the ``effective dimension'' of the feature space that is important for predicting ENSO phase at that lead time. This can be used to trim the number of features in $X$ by removing those SSA modes in the SST, $\mathrm dT/\mathrm dz$ and wind stress fields that are never important at any lead time (i.e. $W_d^{(n)}\le1/D\,\forall\, n=1,\ldots,24$). Doing so results in a smaller overall number of PCs being retained: 154, 178 and 64 in each field respectively, which we subsequently denote as $N_1$, $N_2$ and $N_3$. This provides a data-driven method for determining the appropriate number of PCs that are relevant for the prediction task, rather than relying on subjective estimates based on explained variance. Figures S6 and S7 of the supplementary material show the full set of plots of $W^{(n)}$ across all lead times.

\begin{figure}[t]
\centering
\begin{tikzpicture}
    \node (img1) {\includegraphics[width=0.85\textwidth]{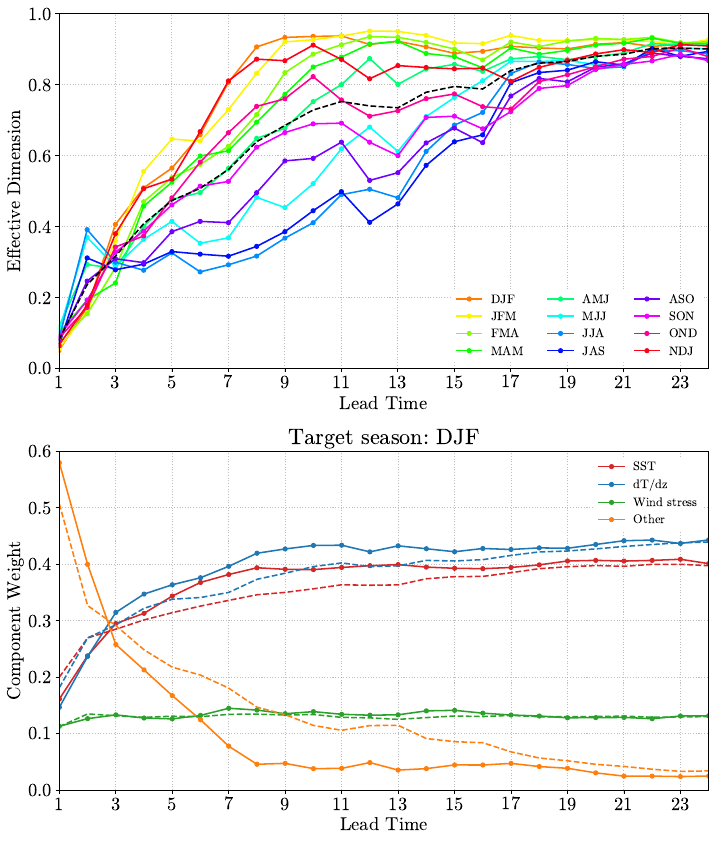}};
    \node[anchor=south east, xshift=20pt, yshift=-10pt] at (img1.north west) {(a)};
    \coordinate (b) at ($ (img1.north west)!0.5!(img1.south west) $);
    \node[anchor=south east, xshift=20pt, yshift=-10pt] at (b) {(b)};
  \end{tikzpicture}
\caption{Normalised effective dimension (a) and component weights for target season DJF (b) vs. lead time. Dashed lines in both plots indicate the all-seasonal average.}
\label{fig:effective-dimension}
\end{figure}

While the expression $W_d>1/D$ is suitable for determining which features are important, in expectation $\tilde D\rightarrow D/2$ as the hyperparameter $\varepsilon_E\rightarrow \infty$ (see \ref{app:effective-dimension} for a proof). For a better measure of the effective dimension of $W^{(n)}$, and thus the sparsity of the feature space at a given lead time, we use the following expression
\begin{linenomath*}
 \begin{equation} \label{eqn:Deff}
    D_{\mathrm{eff}} = \exp\left(-\sum_{d=1}^DW_d\log\left(W_d\right)\right),
 \end{equation}
\end{linenomath*}
i.e. the exponential of the entropy of $W$ \cite{Hill1973}, which has the required property that $D_\mathrm{eff}\rightarrow 1$ as $\varepsilon_E\rightarrow 0$ and $D_\mathrm{eff}\rightarrow D$ as $\varepsilon_E\rightarrow \infty$. Normalising by $D$ therefore gives a quantity that varies between $1/D$ and $1$, depending on the sparsity of the probability distribution. This quantity is plotted in figure \ref{fig:effective-dimension}a as a function of lead time, stratified by target season. From this plot, some important trends can be observed. At short lead times ($\le3$ months) there is less variation in the seasonally-stratified feature importance about the all-seasonal average, given by the dashed line in the plot, and the effective dimension is low, indicating that only the leading SSA modes in each field (as well as the index itself) are important. Similar behaviour is observed at long lead times ($\ge18$ months), although now the effective dimension is high, indicating that almost all SSA modes are important to some degree and therefore the complexity of the input space (features + labels) is higher relative to short lead times. However, for the intermediate lead times the variability of the feature importance with target season is much greater: the effective dimension is typically lowest for target seasons MJJ, JJA and JAS and highest for NDJ, DJF and JFM. This can be explained by considering the Boreal spring predictability barrier. For example, predictions at these lead times targetting Boreal winter need to successfully predict the onset of the next event, through the spring predictability barrier. This requires comparatively more information and therefore the effective dimension (i.e. the entropy of the probability distribution $W$) is higher. Similarly, predictions at these lead times targetting Boreal summer only need to successfully predict the decay of the current event---a task which is comparatively easier---therefore requiring less information on average and a lower effective dimension / entropy of the probability distribution $W$. 

Additional insight into the relative importance of each field may be gained by inspecting plots of the type shown in figure \ref{fig:effective-dimension}b, which shows the total importance (calculated as $\sum_{d=1}^{N_f}W^{(n)}_{d_f+d-1}$ for $f=1,2,3$) assigned to each field as a function of lead time, in this case restricted to target season DJF (with dashed lines indicating the all-seasonal averages). The remaining probability mass in $W^{(n)}$, belonging to the index and categorical features, is summed and also plotted as ``other''. For lead times $n>3$ months, the most important contributions come from the SST and $\mathrm dT/\mathrm dz$ fields as expected, with $\mathrm dT/\mathrm dz$ being the most important and SST close behind. Across all lead times, the importance of the wind stress field is relatively constant, meaning that the steady increase with lead time in importance assigned to $\mathrm dT/\mathrm dz$ and SST is mostly from a steady decrease in importance assigned to the index and categorical features, i.e. persistence is a less important predictor at longer lead times. For the particular target season of DJF shown in figure \ref{fig:effective-dimension}b, at the intermediate lead times of $3 < n < 18$ months the importance assigned to the index and categorical features is lower than the all-seasonal average, while the importance assigned to the $\mathrm dT/\mathrm dz$ and SST fields is higher than the all-seasonal average. This is for the same reason as the trend in the effective dimension at these lead times: predicting the onset of the next event through the Boreal spring barrier requires placing less emphasis on the persistence of the current event and more emphasis on the (potentially) weaker signals contained in the $\mathrm dT/\mathrm dz$ and SST fields, which are likely to be spread out over a wide range of SSA modes in those fields. Similar conclusions can be drawn when inspecting the same plot of component weights for a target season of MJJ (figure S8 in the supplementary material), where now the importance assigned to the index and categorical features over this range of lead times is higher than the all-seasonal average, while the importance assigned to the $\mathrm dT/\mathrm dz$ and SST fields is lower. 

Finally, we conclude this section by demonstrating how the feature importance plots in figure \ref{fig:importance} can be converted to spatial importance maps that highlight the geographical regions that receive the most weight when making predictions at a given lead time. To convert the aggregated feature-importance vectors $W^{(n)}\in\mathbb{R}^{D}$ into spatial importance maps in physical space, the SSA bases of each field (i.e. the EOFs) are combined with the weights $W^{(n)}_j$ assigned to each PC from that field to form a normalised importance attribution map at each lead time. The index/categorical features are excluded prior to this step, with the truncated $W^{(n)}$ renormalised so that $\sum W^{(n)}=1$. The procedure for generating the spatial importance maps can be summarised as follows:
\begin{enumerate}
    \item Load the lagged SSA EOFs (embedding length $L=12$, i.e.\ lags $l=0,\ldots,11$) for the following fields:
    \begin{enumerate}
        \item $\mathrm{EOF}^{\mathrm{SST}}_{j}(\mathbf{x},l)$ for $j=1,\ldots,N_1$ (denoted as field $f=1$),
        \item $\mathrm{EOF}^{\mathrm dT/\mathrm dz}_{j}(\mathbf{x},z,l)$ for $j=1,\ldots,N_2$ (denoted as field $f=2$),
        \item $\mathrm{EOF}^{\tau}_{j,c}(\mathbf{x},l)$ for $j=1,\ldots,N_3$ and $c\in\{x,y\}$ (denoted as field $f=3$),
    \end{enumerate}
    as well as per-gridpoint standard deviations $\sigma^{(f)}(\mathbf x)$ of each field and land-sea masks $\Omega_f$. 
    \item For each lead time $n$, split $W^{(n)}$ into field-specific blocks $W=(W^{(1)},W^{(2)},W^{(3)})$. These blocks define the relative importance of each SSA mode within each field.
    \item For a given lag $l$, compute per-field maps proportional to the squared, standardised EOF amplitudes, weighted by both $W^{(f)}$ and the eigenvalue spectrum of the SSA decomposition. Specifically, given the eigenvalues $\lambda^{(f)}_{j}$ for fields $f\in\{1,2,3\}$, first compute the following (unnormalised) maps: 
    \begin{linenomath*}
    \begin{align}
    \tilde I^{(1)}(\mathbf{x};l) &= \sum_{j=1}^{N_1} W^{(1)}_{j}\lambda^{(1)}_{j}
    \left(\frac{\mathrm{EOF}^{(1)}_{j}(\mathbf{x},l)}{\sigma^{(1)}(\mathbf{x})}\right)^{2}, \\
    \tilde I^{(2)}(\mathbf{x},z;l) &= \sum_{j=1}^{N_2} W^{(2)}_{j}\lambda^{(2)}_{j}
    \left(\frac{\mathrm{EOF}^{(2)}_{j}(\mathbf{x},z,l)}{\sigma^{(2)}(\mathbf{x},z)}\right)^{2}, \\
    \tilde I^{(3)}(\mathbf{x};l) &= \sum_{j=1}^{N_3} W^{(3)}_{j}\lambda^{(3)}_{j}
    \sum_{c}\left(\frac{\mathrm{EOF}^{(3)}_{j,c}(\mathbf{x},l)}{\sigma^{(3)}(\mathbf{x})}\right)^{2}.
    \end{align}
    \end{linenomath*}
    Each field is then normalised by its own total ($W$-weighted) variance, i.e.
    \begin{linenomath*}
    \begin{equation}
    I^{(f)} \;=\; \frac{\tilde I^{(f)}}{\sum_{j}W^{(f)}_{j}\lambda^{(f)}_{j}}.
    \end{equation}
    \end{linenomath*}
    This weighting by $W^{(f)}_{j}$ is crucial, as it turns the spatial map that is obtained from merely conveying ``where the field has variance'' to instead be ``where the field has variance in the modes that are important for predictability''.
    \item Average over all $(n,l)$ combinations that lie on the diagonal $d=n+l$ for each lead $n\in\{0,\ldots,24\}$ and lag $l\in\{0,\ldots,11\}$:
    \begin{linenomath*}
    \begin{equation}
    \bar I^{(f)}_{d} \;=\; \frac{1}{N_d}\sum_{n=1}^{\min(d,24)} I^{(f)}\!\left(\mathbf x\,;l=d-n\right),
    \end{equation}
    \end{linenomath*}
    where $N_d$ is the number of contributing maps for each diagonal $d=1,\ldots,24$ months. Note: this is essentially the same diagonal averaging procedure used to compute reconstructed components in SSA \cite{Vautard1989}.
    \item Finally, rescale each field map so that it has unit mean over the ocean:
    \begin{linenomath*}
    \begin{equation}
    \bar I^{(f)}_{d} \leftarrow \frac{\bar I^{(f)}_{d}}{\left\langle \bar I^{(f)}_{d}\right\rangle_{\Omega_f}}.
    \end{equation}
    \end{linenomath*}
    For the purposes of visualisation, the $\mathrm dT/\mathrm dz$ importance maps are restricted to an equatorial average between 5$^\circ$S and 5$^\circ$N.
\end{enumerate}
This last step ensures that the three fields are on a common, dimensionless scale and makes $\bar I^{(f)}_{d}$ directly comparable across lead times and between fields, since values $>1$ indicate above-average importance within that field’s domain. Stratification by target season and/or class is also straightforward to implement as it merely consists of restricting the individual vectors $W$ that are used to form the aggregated probability measure $W^{(n)}$. Additionally, in the case of stratification by target season, the per-gridpoint standard deviations $\sigma^{(f)}(\mathbf x)$ are recomputed for only the months in that target season. 

\newpage
\begin{figure}[H] 
\centering
\includegraphics[height=\textheight]{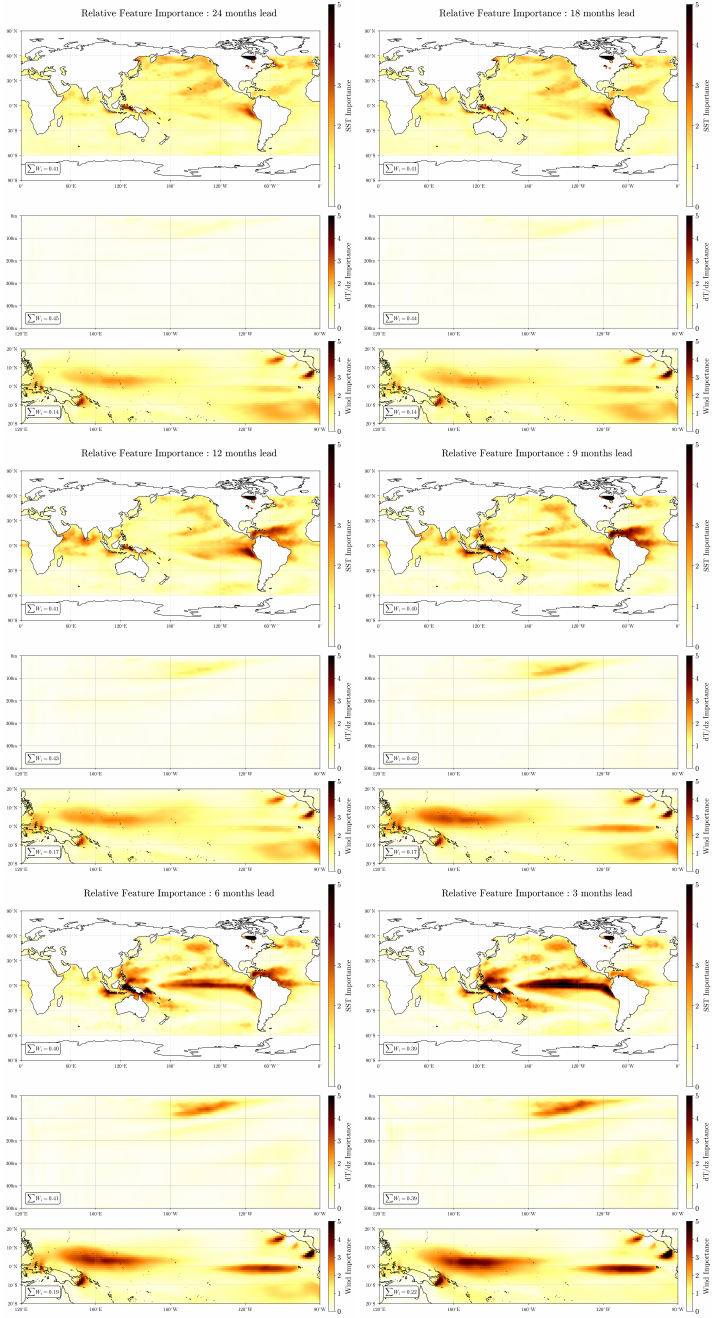}
\vspace{-1em}
\caption{Spatial feature importance maps for target season DJF at various lead times.}
\label{fig:spatial-importance}
\end{figure}

Figure \ref{fig:spatial-importance} shows an example of these spatial importance maps for lead times of 3, 6, 9, 12, 18 and 24 months (plotted in descending order), restricted to the target season DJF. While it is also possible to restrict the spatial importance maps to just the El Ni\~no/Neutral/La Ni\~na class, we find that this results in only negligible differences. To help with the comparison of relative importance between each field, the total weight in $W^{(n)}$ assigned to that field is shown in the bottom left corner of each subplot. The sequence of images shown in figure \ref{fig:spatial-importance} highlights the spatial variation in importance for predicting the phase of an ENSO event occurring in DJF at a given lead time. At the longest lead times shown (24 and 18 months), the most important regions for SST are the eastern tropical Pacific (confined mainly to near the coast of South America), the northeastern extra-tropical Pacific, the western and central Indian Ocean and some smaller hotspots located either side of New Guinea. For $\mathrm dT/\mathrm dz$, there is very little importance appearing in the equatorial slice, aside from a small region located in the upper thermocline ($z<100$m) of the central Pacific, while for wind stress the important regions are along the equator west of the dateline, as well as hotspots located off the coast of Central and South America and either side of New Guinea (i.e. in the same locations as the hotspots for SST).

For the intermediate lead times (12 and 9 months), there is a slight increase in importance in the equatorial Pacific for the SST field. At 12 months lead time, the most important regions are now the tropical Atlantic (north of the equator in particular), the Indian Ocean and the Maritime Continent, with importance remaining in the eastern equatorial Pacific off the coast of South America. At 9 months lead time, the importance in the eastern equatorial Pacific has reduced, while a pattern of importance in the northeastern extra-tropical Pacific that has been present throughout the sequence thus far, and which closely matches the Nortrh Pacific Meridional Mode \cite{Chiang2004}, has stengthened. The importance in the tropical Atlantic has also further strengthened and there is an eastward shift of importance from the Indian Ocean to the Maritime Continent. In the $\mathrm dT/\mathrm dz$ field at these lead times, the upper thermocline of the central Pacific is still the most important region, with the importance concentrating more strongly there, although the overall imnportance in the equatorial slice is still quite weak---an indication that most of the importance for this field is in the off-equatorial regions still. Meanwhile in the wind stress field, there is a westward shift in importance towards the Western Pacific Warm Pool and the Maritime Continent, coinciding with the regions where westerly wind bursts---an important trigger for El Ni\~no events at these lead times---primarily occur. There is also the emergence of a secondary region of importance in the eastern equatorial Pacific (west of the Galápagos Islands) at 9 months lead. 

For the shortest lead times (6 and 3 months), the SST importance concentrates strongly in the equatorial Pacific. At 6 months lead the tropical Atlantic is still strongly important, while the importance in the Indian Ocean has reduced. The importance over the Maritime Continent and Western Pacific Warm Pool has also increased, along with the emergence of a new region of importance over the South Pacific islands (i.e. just south of the SPCZ). At 3 months lead these regions remain important, with the main change being a shift in importance away from the tropical Atlantic and, as expected, towards the central and eastern equatorial Pacific. For $\mathrm dT/\mathrm dz$, there is only a slight eastward shift of the most important region in the upper thermocline at these lead times, however there is now a much higher concentration of importance in this region of the equatorial slice. Finally, for the wind stress field there is little change in importance between 9 and 6 months lead time aside from a reduction of importance in the South American coastal region and an increase in importance of the other areas. Similarly, between 6 and 3 months lead time, the only major change is a reduction in importance over the Maritime Continent and an eastward shift of the region of importance over the western equatorial Pacific to be concentrated closer to the dateline.

A minor caveat to note is that the current 2D equatorial slice for visualising the $\mathrm dT/\mathrm dz$ field may be missing regions of off-equatorial importance that are significant and of interest, particularly at longer lead times. However, by performing the equatorial average after the 3D field has been rescaled to have unit mean, we can at least determine whether most of the importance lies near the equator or not. Similarly, another limitation to note is that the ``true'' importance for predictability at a given lead time may actually be in other fields that are correlated with the ones we are using here. The full set of importance maps for all lead times, target seasons and classes can be viewed using the web app accompanying this article.

\subsection{Case studies} \label{subsec:case-studies}

The affiliation sequences $\Gamma^{(n)}_{:,t}$ provide a means of analysing the pathways taken by individual ENSO events through the phase space defined by the set of superclusters across all lead times. As mentioned in section \ref{subsec:superclusters}, for any particular target date $t$ in the evaluation period, the specific pathway taken to the event is determined by the affiliations $\Gamma_{:,t}^{(n)}$ at each lead $n$. Using equation \ref{eqn:Xhat}, a reconstructed feature vector $\hat X^{(n)}_{:,t}$ can be calculated at each lead time $n$ for a given target date $t$. These feature vectors can then be converted to spatiotemporal composites in the original SST, $\mathrm dT/\mathrm dz$ and wind stress fields using equation \ref{eqn:composites}.

However, because each feature dimension corresponds to an SSA mode with a 12-month embedding length, the reconstructed feature vector $\hat X^{(n)}_{:,t}$ at lead $n$ encodes information about the state of each field at lags $l=0,\ldots,11$ months prior to time $t-n$ months. As was done for the spatial importance maps in section \ref{subsec:importance}, we employ diagonal averaging to reconstruct a single time series from these embedded representations. Specifically, for each offset $d=1,\ldots,24$ months prior to the target date $t$, we average over all $(n,l)$ combinations that lie on the diagonal $d=n+l$:
\begin{linenomath*}
\begin{equation} \label{eqn:diagonal-avg}
\bar{\mathcal{C}}^{(d)}(\mathbf{x},t) = \frac{1}{N_d}\sum_{n=1}^{\min(d,24)}\mathcal{C}^{(n)}\!\left(\mathbf{x},t;l=d-n\right),
\end{equation}
\end{linenomath*}
where $N_d$ is the number of contributing $(n,l)$ pairs for each diagonal $d$ (i.e.\ those satisfying $0\le l\le 11$ and $1\le n\le 24$), and $\mathcal{C}^{(n)}(\mathbf{x},t;l)$ denotes the composite at lag $l$ produced from the reconstructed feature vector at lead $n$. Explicitly, this composite is computed by first calculating the reconstructed feature vector in the latent space using equation \ref{eqn:Xhat}, rescaling by $1/\sqrt{W^{(n)}}$ to the unweighted feature space, and then applying equation \ref{eqn:composites} at the desired lag $l$. The corresponding class probability distribution at each offset $d$ is similarly averaged:
\begin{linenomath*}
\begin{equation} \label{eqn:probs-avg}
\bar{\Pi}_{:,t}^{(d)} = \frac{1}{N_d}\sum_{n=1}^{\min(d,24)}\Lambda^{(n)}\Gamma^{(n)}_{:,t},
\end{equation}
\end{linenomath*}
where the sum is restricted to the same valid $(n,l)$ pairs.

This diagonal averaging procedure produces a sequence of 24 composite patterns for each target date, representing the reconstructed precursor states at $d=24,23,\ldots,1$ months prior to the event. Unlike the canonical pathways shown in figure \ref{fig:vertical}, which represent the most probable trajectory through the Bayesian network $\mathcal{G}$, these case study composites capture the specific combination of supercluster states that were actually occupied (in a fuzzy sense) in the lead up to a particular event at time $t$. This enables detailed comparisons between different events of the same type (e.g. different El Ni\~no events) to identify commonalities and differences in their precursor evolution. It also enables comparisons between the reconstructed precursors and observations at that same lead, providing another means of interpreting which regions the model is relying on most to successfully predict a given event.

\newpage
\begin{figure}[H] 
\centering
\begin{tikzpicture}
    \node (img1) {\includegraphics[height=\textheight]{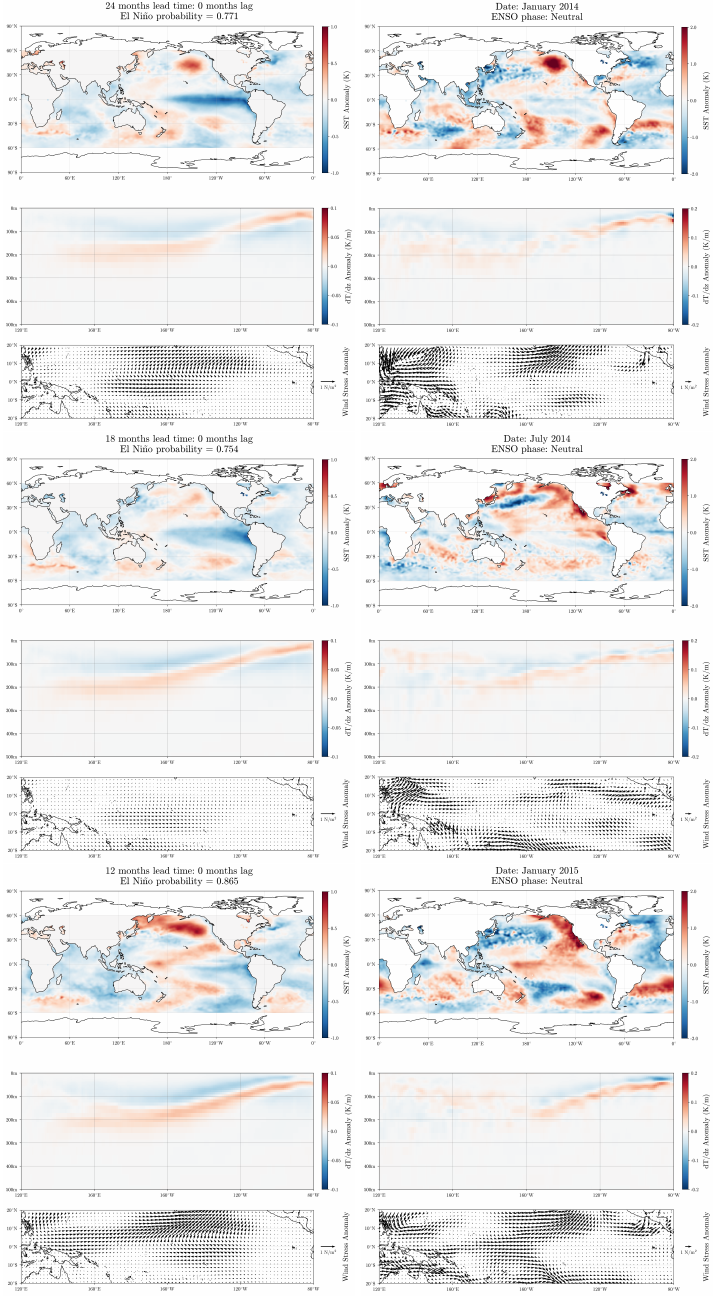}};
    \node[anchor=north west, xshift=0pt, yshift=0pt] at (img1.north west) {(a)};
    \coordinate (b) at ($ (img1.north west)!0.543!(img1.north east) $);
    \node[anchor=north east, xshift=0pt, yshift=0pt] at (b) {(b)};
\end{tikzpicture}
\vspace{-1em}
\caption{January 2016 case study, showing reconstructed (a) and observed (b) precursors.}
\label{fig:jan-2016-part-1}
\end{figure}

\newpage
\begin{figure}[H] 
\centering
\begin{tikzpicture}
    \node (img1) {\includegraphics[height=\textheight]{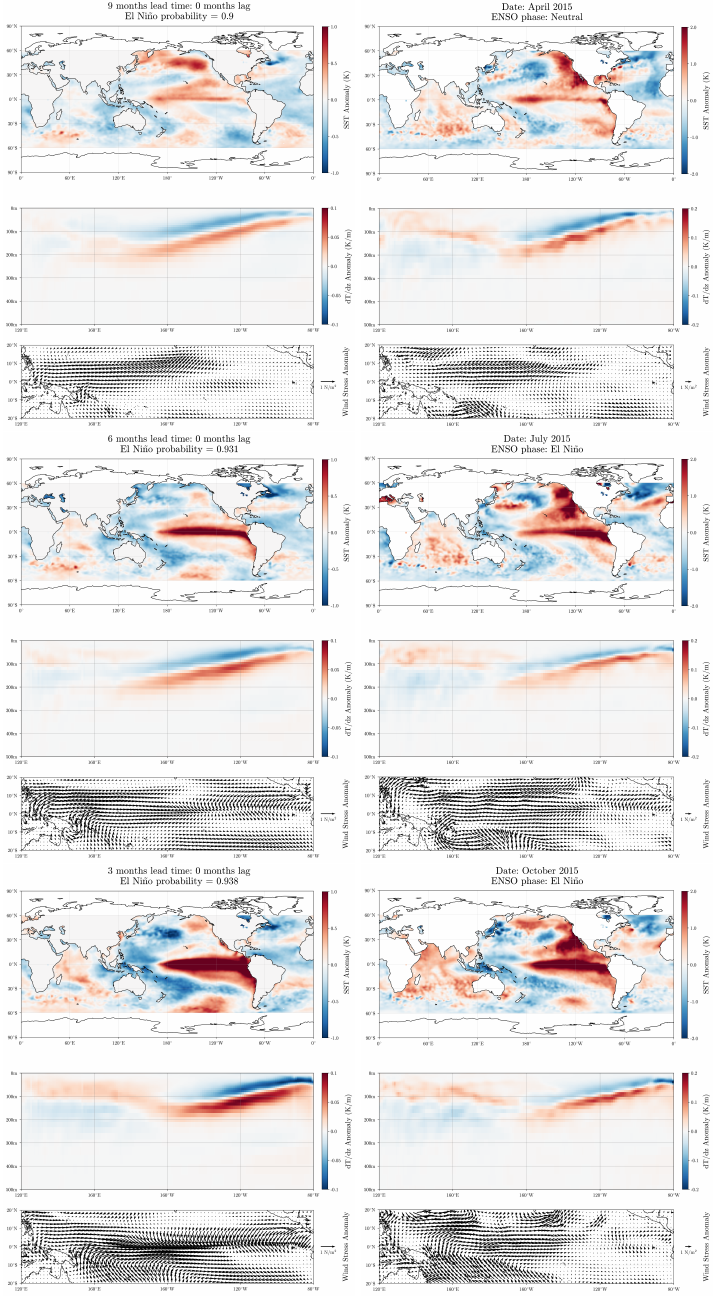}};
    \node[anchor=north west, xshift=0pt, yshift=0pt] at (img1.north west) {(a)};
    \coordinate (b) at ($ (img1.north west)!0.543!(img1.north east) $);
    \node[anchor=north east, xshift=0pt, yshift=0pt] at (b) {(b)};
\end{tikzpicture}
\vspace{-1em}
\caption{January 2016 case study, showing reconstructed (a) and observed (b) precursors.}
\label{fig:jan-2016-part-2}
\end{figure}

Figures \ref{fig:jan-2016-part-1} and \ref{fig:jan-2016-part-2} show such a comparison for a target date of January 2016, at lead times of 24, 18 and 12 months (figure \ref{fig:jan-2016-part-1}) and 9, 6 and 3 months (figure \ref{fig:jan-2016-part-2}). In both figures the left-hand side subplots show the reconstructed SST, $\mathrm dT/\mathrm dz$ and wind stress fields while the right-hand side subplots show the observed fields at the same lead prior to the target date. The titles of each left-hand side subplot contain the predicted class probability distribution (given by equation \ref{eqn:probs-avg}) at lead $n$, while the right-hand side subplot titles contain the observed phase of ENSO at that same lead. 

At 24 months lead time, the main areas of agreement in the SST field are the large warm anomaly in the northeast Pacific ocean near the Gulf of Alaska, often referred to as ``The Blob'' \cite{Bond2015}, a horseshoe-type pattern of warm anomalies that extends from the northeast Pacifc to the Maritime Continent and then back towards the south Pacific islands that is reminiscent of the negative phase of the PDO \cite{Zhang1997} as well as an anomalously cold tropical Atlantic ocean. The agreement in other regions is not as strong, with the overall area-weighted pattern correlation for SST between the reconstructed and observed fields only equal to 0.32. The agreement between the $\mathrm dT/\mathrm dz$ fields is slightly stronger, with the patterns best matching west of 160$^\circ$W (the area-weighted pattern correlation is 0.39), while the agreement between the wind stress fields is similar with an area-weighted vector pattern correlation of 0.47. 

At 18 months lead time there is still some agreement between the SST fields outside of the tropical Pacific, in particular south of 20$^\circ$S, as well as the tropical and south Atlantic, but overall the agreement is worse than at 24 months lead; as reflected by a pattern correlation of only 0.015. There is somewhat better agreement in the $\mathrm dT/\mathrm dz$ field (the pattern correlation is 0.34) and but less so for the wind stress field (vector pattern correlation of 0.14). Moving to 12 months lead time, in the SST fields the main region of agreement is the diagonal band of warm SSTs extending from the coast of North America to the equatorial Pacific---matching the canonical pattern of the North Pacific Meridional Mode---as well as a cold anomaly in the tropical Atlantic. The overall agreement is still quite low however, with a pattern correlation of 0.027. The fact that the NPMM is captured in the reconstructed composite is noteworthy as it is known to be a key trigger for El Ni\~no events via the seasonal footprinting mechanism at this lead time \cite{Vimont2001}. In terms of $\mathrm dT/\mathrm dz$ there is better agreement, mainly east of 160$^\circ$W, with a pattern correlation of 0.54. Similarly, in the wind stress field there are also now regions of good agreement (vector pattern correlation of 0.47), mainly over the same diagonal band corresponding to the NPMM pattern as well as in the western tropical Pacific, where anomalous westerlies are beginning to form. 

At 9 months lead, the regions of agreement in all three fields are mostly the same, with the exception that the reconstructed SST field now matches the warming pattern along the equator that is present in the observed SST field. The pattern correlation for SST has also improved to 0.32. For $\mathrm dT/\mathrm dz$, east of 160$^\circ$W is still the best region of agreement, although the reconstructed field now also better matches the observed field in the upper Western Pacific Warm Pool (above 100m depth) where the thermocline is shoaling and the overall pattern correlation is now 0.80. Meanwhile for wind stress the reconstructed field now closely matches the observed westerly wind bursts and the pattern correlation is 0.67. At 6 months lead time the visual agreement across all three fields improves dramatically. For SST, the reconstructed field now exhibits the characteristic warm tongue in the eastern and central equatorial Pacific and the pattern correlation has increased to 0.62. For $\mathrm dT/\mathrm dz$ the reconstructed and observed patterns match very closely (the pattern correlation is 0.82), while for wind stress the anomalous westerlies along the equator are well-captured by the reconstruction, as is the convergence zone in the eastern equatorial Pacific, with an overall pattern correlation of 0.65.  Finally, at three months lead time the reconstructed SST field has developed into the canonical El Ni\~no pattern (e.g. closely matching the SST composite for cluster 8 in figure \ref{fig:vertical}), with the area-weighted pattern correlation further improving to 0.73, while the reconstructed and observed $\mathrm dT/\mathrm dz$ fields remain well-matched (pattern correlations of 0.89 and 0.62 respectively). For wind stress the agreement is similar to 6 months lead, however the reconstructed field now matches the south-easterly anomalies in the south Pacific between 140$^\circ$E and 140$^\circ$W that correspond to a north-easterly displacement of the South Pacific Convergence Zone.

This case study, along with a similar case study for a target date of January 2018 shown in figures S9 and S10 of the supplementary material, illustrates how only a small number of key features are required to be correctly identified in order to relate the observed precursor pattern to the canonical patterns that are captured by the supercluster centroids. Moreover, these key features may only need to be present in a subset of the fields at a given lead time, and may jump between fields as the lead time varies. For the full suite of case studies, the reader is referred to the web app accompanying this manuscript. 

Under the assumption that the regions of best agreement between the reconstructed and observed fields at a given lead $n$ correspond to the features that are being used to correctly predict the phase of ENSO in $n$ months time, we can calculate the correlation between the reconstructed and observed fields at each grid point across all possible case studies in the evaluation period (i.e. target dates from January 2002 to December 2024 and lead times $n=1,\ldots,24$ for which the affiliation vector $\Gamma_{:,t}^{(n)}$ is available). These provide a complementary picture to the spatial importance maps presented in section \ref{subsec:importance}. As with the importance maps, we can restrict the correlation to just target dates for a given season, class, or both. 

Figure \ref{fig:correlation-maps} shows the correlation maps for lead times of 24, 18, 12, 9, 6 and 3 months across all target dates in the evaluation period for which an El Ni\~no was present. The range of the plots for each field has been restricted so that only positive correlations are shown (i.e. the quantity being plotted is actually $\max(r,0)$). At the longest lead times (24 and 18 months), there are only small regions in all three fields for which the correlation across all El Ni\~no events in the evaluation period is high. The correlation becomes much higher at 12 months lead; for SST the regions of highest correlation correspond to the patterns of variability associated with the North Pacific Meridional Mode and the Atlantic Meridional Mode \cite{Chiang2004}, while for $\mathrm dT/\mathrm dz$ and wind stress the regions of highest correlation correspond well with the regions of highest feature importance identified in figure \ref{fig:spatial-importance} at this lead time. The same is true for $\mathrm dT/\mathrm dz$ and wind stress at 9 months lead, while for SST there is now high correlation in the tropical Atlantic (matching well with the feature importance map at 9 months lead) and central Pacific. At 6 and 3 months lead time, the highest correlation for SST is now in the tropical Pacific where the primary warm anomaly associated with El Ni\~no occurs. For $\mathrm dT/\mathrm dz$ there is generally high correlation across the thermocline, while for wind stress the correlation in the region of westerly wind bursts increases further, along with a secondary region of high correlation west of the Galápagos Islands that agrees well with the feature importance maps.

\newpage
\begin{figure}[H] 
\centering
\includegraphics[height=\textheight]{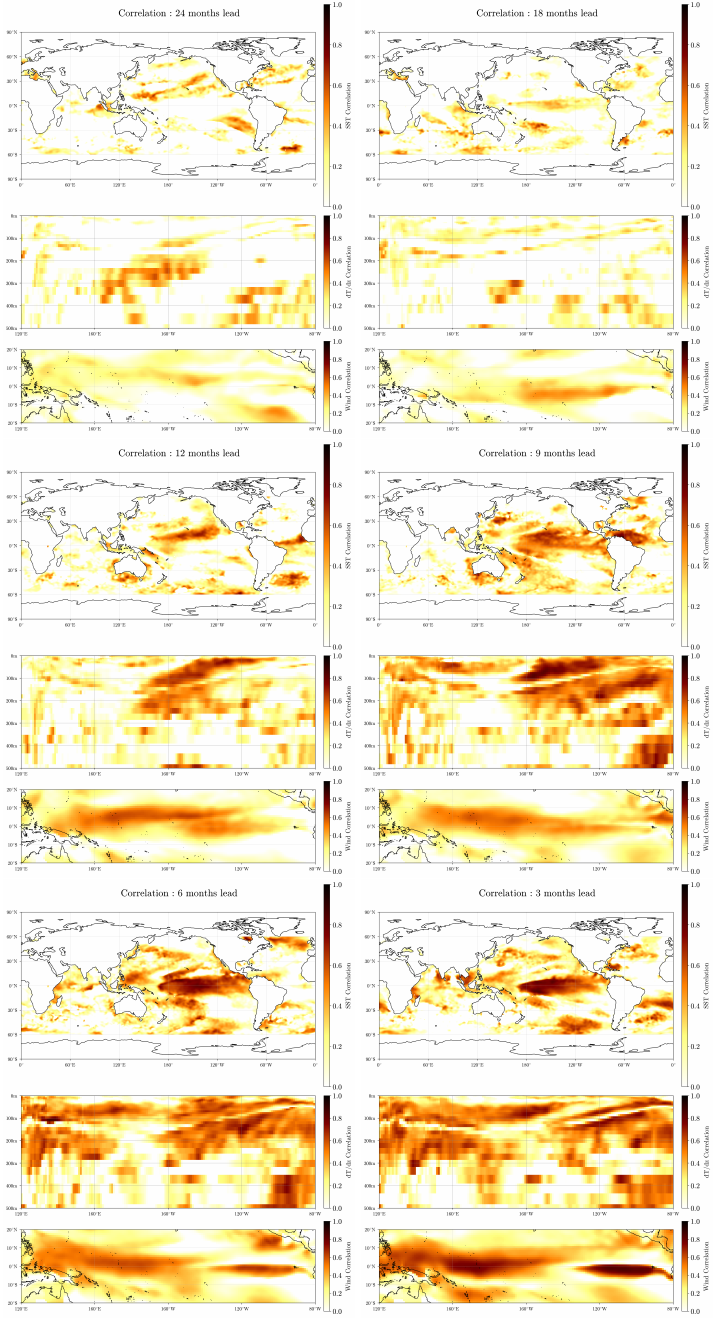}
\vspace{-1em}
\caption{Correlation maps for all target dates on which an El Ni\~no occurred.}
\label{fig:correlation-maps}
\end{figure}

Figure S11 in the supplementary material shows the same plot as figure \ref{fig:correlation-maps} but for target dates in the evaluation period for which a La Ni\~na was present. Key differences with respect to figure \ref{fig:correlation-maps} are summarised as follows:
\begin{enumerate}[label=(\roman*)]
    \item At 24 and 18 months lead there is stronger correlation for SST in the tropical Pacific and similar (weak) correlation across the $\mathrm dT/\mathrm dz$ and wind stress fields.
    \item For SST at 12 and 9 months lead there is much stronger correlation across the eastern and central Pacific. The correlation in the Indian ocean is also much higher, while in the tropical Atlantic it is lower.
    \item For $\mathrm dT/\mathrm dz$ at 12 and 9 months lead the correlation is much higher in the Western Pacific Warm Pool, lower in the central Pacific and similar in the eastern Pacific. For wind stress there is generally weaker correlation at these lead times than in the El Ni\~no case.
    \item At 6 and 3 months lead the main difference for SST is the much higher correlation in the Indian Ocean basin. For $\mathrm dT/\mathrm dz$ the correlation is high across the entire thermocline (even more so than for El Ni\~no), while for wind stress the regions of high correlation are generally similar to figure \ref{fig:correlation-maps}, with the exception of lower correlation over the Maritime Continent and in the region west of the Galápagos Islands.
\end{enumerate}

\subsection{Precursor identification plots} \label{subsec:precursors}
To conclude this section, we demonstrate how the importance maps from section \ref{subsec:importance} (or the correlation maps from section \ref{subsec:case-studies}) can be further enhanced by combining them with other composites so as to identify not just the location of a particular precursor but also its amplitude and sign. To do this, we take the (detrended, smoothed and regridded) observational data for which the SSA decomposition was computed on, form composites for a desired target season and/or observed class, and then overlay specific contours of the importance maps on top of these composites.

The procedure is as follows:
\begin{enumerate}
    \item For each target date $t$ and lead time $n$ in the evaluation period, the observational fields $F^{(f)}(\mathbf{x}, t-n)$, with $f=1,2,3$ corresponding to the SST, $\mathrm dT/\mathrm dz$ and wind stress fields respectively, are retrieved at the date $n$ months prior to the target date.
    \item The collections for each field are then aggregated to form composites, weighted by the number of valid samples. Specifically, for each lead $n=1,\ldots,24$ we compute
    \begin{linenomath*}
    \begin{equation} \label{eqn:gt-composite}
    \bar{F}^{(f)}_n(\mathbf{x}) = \frac{1}{N_n}\sum_{t\in\mathcal{T}_n} F^{(f)}(\mathbf{x}, t-n),
    \end{equation}
    \end{linenomath*}
    where $\mathcal{T}_n$ is the set of target dates for which lead-$n$ affiliations are available and $N_n = |\mathcal{T}_n|$ is the number of such dates.
    \item The composites can be stratified by target season and/or observed ENSO class by restricting the sum in equation \ref{eqn:gt-composite} to the appropriate subset of target dates. For example, to generate composites for El Ni\~no events occurring in DJF, the sum is restricted to target dates $t$ for which $\Pi_{3,t}=1$ (i.e.\ El Ni\~no class) and $\mathrm{month}(t)\in\{12,1,2\}$.
    \item The spatial importance maps $\bar{I}^{(f)}_d(\mathbf{x})$ (for $d=n$), computed for the same stratification (target season and/or class), are then overlaid as contours on top of these composites when plotting. Contour levels are chosen to highlight regions of above-average importance, specifically 1$\times$, 2$\times$ and 4$\times$ the domain-mean importance are chosen and are represented using various types of hatching (dots, diagonal lines and crosses respectively) in the case of SST and $\mathrm dT/\mathrm dz$ or as a contour flood with different shades of grey in the case of wind stress.
\end{enumerate}

\newpage
\begin{figure}[H] 
\centering
\includegraphics[height=\textheight]{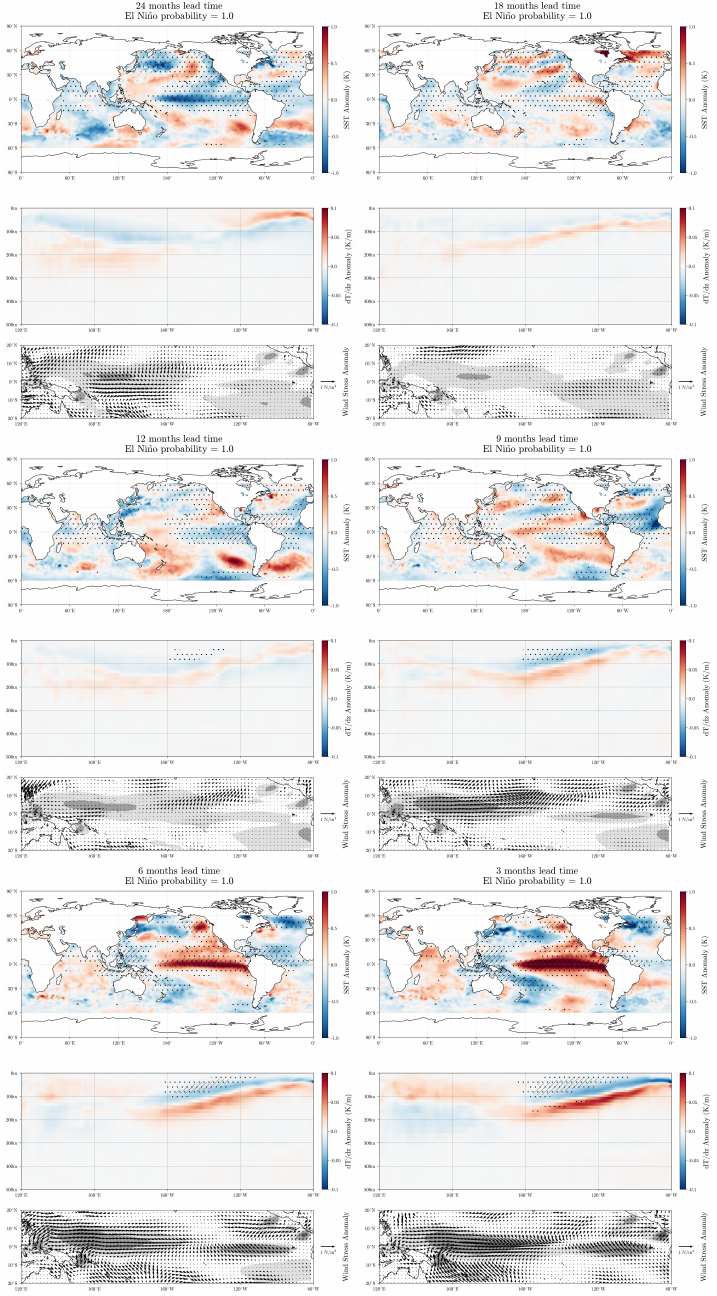}
\vspace{-1em}
\caption{Precursor plots for all DJF target dates on which an El Ni\~no occurred.} 
\label{fig:precursor-plots}
\end{figure}

This results in plots that simultaneously display: (i) the mean observed precursor pattern, indicating the sign and amplitude of anomalies at each grid point in each field, and (ii) contours delineating where the model assigns above-average importance for making successful predictions at that lead time. The combination allows for direct identification of which physical signals (precursors) the model is relying on at a given lead, helping to bridge the gap between ``where the model is looking'' and ``what does it expect to see there''. Furthermore, by stratifying by ENSO class and/or target season, differences in the precursor patterns and important regions between El Ni\~no and La Ni\~na events or different phases of the seasonal cycle can be examined.

Figure \ref{fig:precursor-plots} shows an example of this type of plot, with the composites restricted to target season DJF and the El Ni\~no class; i.e. the contours are derived from essentially the same importance maps shown in figure \ref{fig:spatial-importance} (with only negligible differences due to the additional stratification by class). Figure S12 in the supplementary material shows a similar plot for DJF and La Ni\~na. As the general trends in spatial importance for this target season have already been discussed in section \ref{subsec:importance}, the discussion here will be kept brief. The main point to note is that the regions of highest importance tend to map very nicely onto specific anomalous structures in each field, for example the cold SST anomaly in the equatorial and north-eastern Pacific at 24 months lead or the strong westerly anomalies in wind stress near 160$^\circ$E at 9, 6 and 3 months lead. This helps to further validate the feature importance methodology, confirming that the regions identified as most important for prediction correspond to genuine precursor anomalies. The consistency between the spatial importance maps and the observational composites provides additional confidence that eSPA models, conditional on predicting the correct class, are making these correct predictions by successfully capturing key dynamical pathways leading to ENSO events.

\section{Conclusions} \label{sec:conclusions}

This study has presented a comprehensive framework for distilling and interpreting the information contained within large ensembles of Entropic Learning (EL) models. While the approach detailed in \citeA{Groom2025} established that ensembles of entropy-optimal Sparse Probabilistic Approximation (eSPA) models could achieve comparable forecast skill to (multi-model ensembles of) operational dynamical systems, while also remaining skilful at lead times longer than those of the forecasts issued by such systems, the sheer volume of models in the entire ensemble presented a barrier to physical interpretation. By aggregating the internal states of successful ensemble members into a reduced set of ``superclusters'' for each lead time, we have successfully distilled the ensemble into a more parsimonious and interpretable representation while also retaining comparable probabilistic forecast skill. Furthermore, the distillation procedure is less costly to perform than the generation of the original ensemble, which was itself demonstrated in \citeA{Groom2025} to be $\sim1000-10000\times$ cheaper than a conventional dynamical model.

A key finding of this work is that the distilled superclusters capture the underlying dynamics of ENSO with remarkable fidelity. The transition probability matrices derived from supercluster affiliation sequences exhibit a mean relaxation time of 6.0 months, closely matching the observed $e$-folding time of the Ni\~no3.4 index. Furthermore, the construction of a Bayesian network linking cluster assignments across lead times revealed consistent dynamical pathways and, in combination with the transition probability matrices, provides a principled way to impute missing affiliations for target dates where no correct ensemble members exist at a given lead. These transition operators demonstrate that distilled models, as well as the original eSPA models they are based on, learn an accurate discrete approximation of the system's spatiotemporal evolution.

An analysis of the aggregated feature importance vectors provided unique insights into the information required to predict ENSO phase. We introduced the concept of (normalised) effective dimension and showed that the complexity of the input space required for correct phase prediction peaks when forecasts must traverse the boreal spring predictability barrier. This aligns with physical intuitions: predicting the onset of an event through the spring barrier requires integrating weak signals from a broad range of precursors, whereas predicting the decay of an event is a simpler task that is well-informed by the persistence of the event itself. Furthermore, the spatial importance maps derived from the aggregated feature importance vectors help identify where predictive information resides in each field and were shown to include known physical precursors, for example the North Pacific Meridional Mode. Crucially, the focus of the models was shown to shift between different regions of each field as well as across fields depending on the lead time and target season. The case studies of the 2015/16 El Ni\~no and 2017/18 La Ni\~na events further illustrate this, showing how the reconstructed fields trace the evolution from extratropical precursors (such as ``The Blob'' and PDO-like patterns) to the mature ENSO state.

Recent advances in Deep Learning (DL)-based ENSO forecasting have increasingly paired skilful architectures with post-hoc XAI (typically some type of gradient-based sensitivity/saliency analysis) to connect their predictions to physically meaningful precursors. However, these methods can result in explanations that are misleading, and many DL studies still rely heavily on CMIP-based training data (given that common approaches for circumventing the small-data problem such as data augmentation cannot be applied in this context), which further complicates the physical interpretation of the resulting attributions. In this study our goal was to explicitly learn such attributions directly from the observational record, and the approach introduced here provides a complementary and robust route to interpretability. Rather than explaining an opaque model after the fact, it yields a compact set of transparent objects (superclusters, transition operators etc) whose meaning is directly tied to the model structure. For example, the spatial importance maps play a similar role to saliency-style attribution, but are derived directly from the internal states of our distilled model rather than from gradients. Similarly, the reconstructed composites provide a concrete narrative by tracing coherent precursor-to-event pathways across lead times that can be directly compared against observations. Taken together, these results suggest that distillation of EL ensembles can match the scientific intent of XAI for DL-based forecast systems, while offering a more intrinsically interpretable diagnostic framework for long-range ENSO predictability that is directly grounded in observations.

There are several caveats to the above results that should be made explicit. Firstly, distilled quantities such as the (imputed) affiliation sequences and aggregated feature importance vectors rely on filtering ensemble members by verification, a procedure which is not available in real time. They should therefore be interpreted primarily as a diagnostic of what the subset of eSPA ensemble members that are correctly predicting ENSO phase are capturing, rather than of the ensemble as a whole. Second, to ensure a consistent latent space for interpretability, this study uses a single SSA decomposition trained on all available data. While the resulting skill of the eSPA ensemble hindcasts produced using features derived from this decomposition does not materially increase relative to a strictly real-time protocol, this choice is still a methodological trade-off that introduces a small discrepancy between the hindcasts performed in this study vs those from \citeA{Groom2025}. Finally, the interpretability products presented here are descriptive: they identify robust associations and pathways in the distilled model state space, but do not by themselves establish causality.
\newpage
There are several direct extensions that could be applied to our framework:
\begin{enumerate}
    \item Replace the verification-based filtering with a real-time model-selection or weighting scheme (e.g.\ a meta-model) so that the ``oracle'' distilled quantities can be approximated in an operational setting.
    \item Extend the affiliation sequence calculation to incorporate categorical features so that short-lead persistence information is retained without sacrificing the benefits of fuzzy cluster assignments.
    \item Calculate separate superclusters for each target season, so that the resulting affiliation sequence and transition operators become cyclostationary.
    \item Move from phase classification to continuous targets (e.g.\ relative Ni\~no3.4 index) or even multi-index formulations (e.g. ENSO + IOD), to take into account the varying amplitude between different events. This is now possible thanks to recent algorithmic developments presented in \citeA{Bassetti2025}.
    \item Expand the composites and feature importance maps to additional fields of interest (e.g. subsurface and/or off-equatorial) as well as to different datasets to test which precursors are robust across observing systems and reanalyses.
\end{enumerate}

Overall, the distillation approach presented here provides a compact and physically-interpretable representation of what an ensemble of entropic learning models is exploiting when it successfully predicts ENSO phase across lead times. By turning an unwieldy ensemble into a reduced set of prototype models, the framework enables a rigorous investigation of long-range ENSO predictability that complements real-time data-driven operational forecasts. Furthermore, the framework also complements post-hoc DL explainability methods by providing interpretability that is largely intrinsic, and can therefore be interrogated directly in terms of regimes, pathways, and physically meaningful precursors.

\appendix
\section{Threshold-count effective dimension of $W$} \label{app:effective-dimension}

Let $D\ge 2$ and let $X=(X_1,\dots,X_D)$ be i.i.d.\ draws from a continuous distribution on $\mathbb{R}$. For $\varepsilon>0$, define the softmax weights
\begin{linenomath*}
\begin{equation}
W_d(\varepsilon)
:=\frac{\exp\!\left(-X_d/\varepsilon\right)}{\sum_{j=1}^D \exp\!\left(-X_j/\varepsilon\right)},
\qquad d=1,\dots,D,
\label{eq:softmax-weights}
\end{equation}
\end{linenomath*}
and the threshold-count effective dimension
\begin{linenomath*}
\begin{equation}
\tilde D(\varepsilon)
:=\sum_{d=1}^D \mathbbm{1}\!\left\{ W_d(\varepsilon)>\frac1D\right\}.
\label{eq:deff-threshold}
\end{equation}
\end{linenomath*}

\begin{theorem}[Large-$\varepsilon$ limit of \ref{eq:deff-threshold}]
\label{thm:deff-half}
Assume that the distribution of $X_d$ is continuous and symmetric about some $\mu\in\mathbb{R}$ (i.e.\ $X_d-\mu =-(X_d-\mu)$). Then
\begin{linenomath*}
\begin{equation}
\lim_{\varepsilon\to\infty}\mathbb{E}[\tilde D(\varepsilon)]=\frac{D}{2}.
\label{eq:deff-limit}
\end{equation}
\end{linenomath*}
\end{theorem}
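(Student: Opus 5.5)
The plan is to reduce the threshold event to a comparison of $X_d$ with the sample mean, via a first-order expansion in $1/\varepsilon$. We then pass to the limit using dominated convergence, and finish with a symmetry argument. By linearity of expectation and exchangeability of the i.i.d.\ coordinates,
\begin{linenomath*}
\begin{equation*}
\mathbb{E}[\tilde D(\varepsilon)]=\sum_{d=1}^D\mathbb{P}\left(W_d(\varepsilon)>\tfrac1D\right)=D\,\mathbb{P}\left(W_1(\varepsilon)>\tfrac1D\right),
\end{equation*}
\end{linenomath*}
so it suffices to show $\mathbb{P}(W_1(\varepsilon)>1/D)\to 1/2$.

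First I rewrite the event. The condition $W_1(\varepsilon)>1/D$ is equivalent to
\begin{linenomath*}
\begin{equation*}
\frac1D\sum_{j=1}^D \exp\left(-(X_j-X_1)/\varepsilon\right)<1.
\end{equation*}
\end{linenomath*}
Multiplying through by $\varepsilon$, this becomes $g_\varepsilon(X):=\varepsilon\bigl(\frac1D\sum_j e^{-(X_j-X_1)/\varepsilon}-1\bigr)<0$. For each fixed realisation, $g_\varepsilon(X)\to -\frac1D\sum_j (X_j-X_1)=X_1-\bar X$ as $\varepsilon\to\infty$, where $\bar X$ is the sample mean. Since $D\ge2$, the variable $X_1-\bar X=\frac{D-1}{D}X_1-\frac1D\sum_{j\ge2}X_j$ is a nontrivial linear combination of independent continuous variables. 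Independence lets us condition on the other coordinates, and continuity of $X_1$ then gives $\mathbb{P}(X_1-\bar X=0)=0$.

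Next comes the limit interchange. On the event $\{X_1\neq\bar X\}$, which has probability one, pointwise convergence of $g_\varepsilon$ to a nonzero limit means that $\mathbbm{1}\{g_\varepsilon(X)<0\}\to\mathbbm{1}\{X_1<\bar X\}$. Bounded convergence then yields
\begin{linenomath*}
\begin{equation*}
\mathbb{P}\left(W_1(\varepsilon)>\tfrac1D\right)\to\mathbb{P}(X_1<\bar X).
\end{equation*}
\end{linenomath*}
This step needs only almost sure convergence and no integrability of $X$. That matters, because the hypotheses do not assume finite moments.

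Finally, the symmetry step. Set $Y_j=X_j-\mu$. The vector $(Y_1,\dots,Y_D)$ has the same law as $(-Y_1,\dots,-Y_D)$ by symmetry and independence. Hence $X_1-\bar X=Y_1-\bar Y$ is symmetric about $0$. Together with the zero probability of the tie event, this gives $\mathbb{P}(X_1<\bar X)=1/2$, which proves \eqref{eq:deff-limit}.

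The main obstacle I expect is the careful handling of the tie event $\{X_1=\bar X\}$ and of the pointwise limit near it. This is exactly where continuity is needed, and I would verify it by conditioning on $(X_2,\dots,X_D)$. I also note that symmetry is essential: without it, the limit would be $D\,\mathbb{P}(X_1<\bar X)$, which need not equal $D/2$.
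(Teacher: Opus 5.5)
Your proposal is correct and follows essentially the same route as the paper. Both arguments expand to first order in $1/\varepsilon$ so that $\{W_d(\varepsilon)>1/D\}$ reduces almost surely to $\{X_d<\bar X\}$, use bounded convergence plus exchangeability to obtain $D\,\mathbb{P}(X_1<\bar X)$, and then apply the reflection $(Y_1,\dots,Y_D)\mapsto(-Y_1,\dots,-Y_D)$ with a null tie event to get $1/2$. Your exact rescaled reformulation $g_\varepsilon$ and your conditioning argument for $\mathbb{P}(X_1=\bar X)=0$ are slightly more careful than the paper's Taylor expansion with an $\mathcal O(\varepsilon^{-2})$ remainder and its bare appeal to continuity, but the underlying argument is the same.
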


\begin{proof}
Consider a single realisation $x_1,\dots,x_D$ and write $\bar x:=\sum_{j=1}^D x_j\,/\,D$. Using the Taylor expansion
\begin{linenomath*}
\begin{equation} \label{eq:taylor-expansion}
\exp(-x_d/\varepsilon)=1-\frac{x_d}{\varepsilon}+\mathcal O(\varepsilon^{-2}),
\end{equation}
\end{linenomath*}
we obtain
\begin{linenomath*}
\begin{equation} \label{eq:normalizer}
\sum_{j=1}^D \exp(-x_j/\varepsilon)
= D-\frac{\sum_{j=1}^D x_j}{\varepsilon}+\mathcal O(\varepsilon^{-2})
= D\Bigl(1-\frac{\bar x}{\varepsilon}+\mathcal O(\varepsilon^{-2})\Bigr).
\end{equation}
\end{linenomath*}
\noindent Inverting \eqref{eq:normalizer} requires the following identity for a geometric series
\begin{linenomath*}
\begin{equation}
\frac{1}{1-z}=1+z+z^2+\cdots \qquad (|z|<1).
\end{equation}
\end{linenomath*}
Define $z:=\frac{\bar x}{\varepsilon}+\mathcal O(\varepsilon^{-2})$, therefore $z\to 0$ as $\varepsilon\to\infty$, and thus
\begin{linenomath*}
\begin{align}
\Bigl(\sum_{j=1}^D \exp(-x_j/\varepsilon)\Bigr)^{-1}
&=\frac{1}{D}\Bigl(1-\frac{\bar x}{\varepsilon}+\mathcal O(\varepsilon^{-2})\Bigr)^{-1} \nonumber\\
&=\frac{1}{D}\Bigl(1+z+\mathcal O(z^2)\Bigr) \nonumber\\
&=\frac{1}{D}\Bigl(1+\frac{\bar x}{\varepsilon}+\mathcal O(\varepsilon^{-2})\Bigr).
\label{eq:normalizer-inverse}
\end{align}
\end{linenomath*}
Substituting \eqref{eq:normalizer-inverse} and \eqref{eq:taylor-expansion} into \eqref{eq:softmax-weights} gives the expansion
\begin{linenomath*}
\begin{equation}
W_d(\varepsilon)
=\frac1D+\frac{\bar x-x_d}{D\,\varepsilon}+\mathcal O(\varepsilon^{-2}),
\qquad d=1,\dots,D.
\label{eq:Wd-expansion}
\end{equation}
\end{linenomath*}
Now, define $\bar X:=\sum_{j=1}^D X_j\,/\,D$. Since the random variables $X_d$ are continuous, $\mathbb{P}(X_d=\bar X)=0$. Therefore, by \eqref{eq:Wd-expansion} we have
\begin{linenomath*}
\begin{equation}
\mathbbm{1}\!\left\{W_d(\varepsilon)>\frac1D\right\}
\rightarrow
\mathbbm{1}\!\left\{X_d<\bar X\right\}
\quad\text{almost surely as }\varepsilon\to\infty.
\label{eq:indicator-limit}
\end{equation}
\end{linenomath*}
By dominated convergence (each summand is bounded by $1$, hence almost sure convergence implies convergence of expectations) and exchangeability (since $X$ is i.i.d.), we have:
\begin{linenomath*}
\begin{equation}
\lim_{\varepsilon\to\infty}\mathbb{E}[\tilde D(\varepsilon)]
=\sum_{d=1}^D \mathbb{P}(X_d<\bar X)
= D\,\mathbb{P}(X_1<\bar X).
\label{eq:deff-prob}
\end{equation}
\end{linenomath*}

It remains to show that $\mathbb{P}(X_1<\bar X)=1/2$ under symmetry. Let $Z_d:=X_d-\mu$ and $\bar Z:=\bar X-\mu$. Symmetry implies
\begin{linenomath*}
\begin{equation}
(Z_1,\dots,Z_D)\ = \ (-Z_1,\dots,-Z_D).
\end{equation}
\end{linenomath*}
The event $\{X_1<\bar X\}$ is equivalently $\{Z_1<\bar Z\}$, which is mapped under $(Z_1,\dots,Z_D)\mapsto (-Z_1,\dots,-Z_D)$ to $\{Z_1>\bar Z\}$. Hence
\begin{linenomath*}
\begin{equation} \label{eq:symmetry}
\mathbb{P}(Z_1<\bar Z)=\mathbb{P}(Z_1>\bar Z).
\end{equation}
\end{linenomath*}
By continuity, $\mathbb{P}(Z_1=\bar Z)=0$, so the two probabilities must both equal $1/2$. Substituting \eqref{eq:symmetry} into \eqref{eq:deff-prob} therefore gives \eqref{eq:deff-limit}.
\end{proof}

\begin{remark}
If the distribution of $X_d$ is continuous but not symmetric, the argument above still yields
\begin{linenomath*}
\begin{equation}
\lim_{\varepsilon\to\infty}\mathbb{E}[\tilde D(\varepsilon)]
= D\,\mathbb{P}(X_1<\bar X),
\end{equation}
\end{linenomath*}
which need not equal $D/2$. Trivially, $0\le \mathbb{P}(X_1<\bar X)\le 1$, so the limit lies in $[0,D]$.
Moreover, if $\mu:=\mathbb{E}[X_1]$ exists, then $\bar X\to \mu$ as $D\to\infty$, and under continuity of the cumulative distribution function $F$ at $\mu$ we get
\begin{equation}
\mathbb{P}(X_1<\bar X)\to \mathbb{P}(X_1<\mu)=F(\mu),
\qquad (D\to\infty),
\end{equation}
so for large $D$ the limit is asymptotically $D\,F(\mu)$.
\end{remark}

To relate the above result to eSPA, Theorem 1 of \citeA{Vecchi2022} shows that the solution for the feature importance vector $W$ is equivalent to \eqref{eq:softmax-weights} up to a constant, which can be combined with the hyperparameter $\varepsilon_E$ to give the general expansion parameter $\varepsilon$ used in the proof. $\varepsilon$ therefore controls the sparsity, or equivalently the effective dimension, of $W$, so we would like our measure of the effective dimension to vary between 1 and $D$ in the limits $\varepsilon\rightarrow 0$ and $\varepsilon\rightarrow \infty$ respectively. Theorem 1 above shows that, under some mild assumptions, this requirement is not satisfied by the threshold-count definition of the effective dimension in \eqref{eq:deff-threshold}.

%
%

\section*{Open Research Section}
The ERA5 and ORAS5 datasets are available at the following links: \url{https://cds.climate.copernicus.eu/datasets/reanalysis-era5-single-levels-monthly-means} and \url{https://cds.climate.copernicus.eu/datasets/reanalysis-oras5}. Source code for eSPA is available at \url{https://github.com/EntropicLearning/EntropicLearning.jl} 
The data used to generate the figures in this paper is available at \url{https://zenodo.org/records/18492383}.
The interactive visualisations accompanying this manuscript are available via a web app hosted at \url{https://app.michaelgroom.net/}. 

\section*{Conflict of Interest declaration}
The authors declare there are no conflicts of interest for this manuscript.

\acknowledgments
MG wishes to acknowledge Adam Malone for assistance with prototyping and implementing the online visualisations that accompany this manuscript.

%
\bibliography{references}
%

\end{document}